\newcommand{\ar}[1]{\stackrel{#1}{\longrightarrow}}
\newcommand{\aro}{
\setbox0\hbox{$\longrightarrow$}%
    \rlap{\hbox to \wd0{\hss$\bell$\hss}}\box0}
\newcommand{\aar}[1]{\mathop{\stackrel{#1}{\aro}}}
\newcommand{\ARuns}{\mathsf{ARuns}}
\newcommand{\look}{\mathsf{la}}
\newcommand{\cont}{\mathsf{co}}
\newcommand{\fair}{\mathsf{f}}
\renewcommand{\epsilon}{\varepsilon}
\newcommand{\ba}{\mathcal A}  
\newcommand{\babis}{\mathcal B}
\def\newarrow#1{\mathop{{\hbox{\setbox0=\hbox{$\scriptstyle{#1\quad}$}{$%
\mathrel{\mathop{\setbox1=\hbox to
\wd0{\rightarrowfill}\ht1=3pt\dp1=-2pt\box1}\limits^{#1}}%
$}}}}}
\newcommand{\nat}{\mathbb N}  
\newcommand{\Nat}{\nat}
\newcommand{\Runs}{\mathop{\mathsf{Runs}}}
\newtheorem{theorem}{Theorem}[section]
\newtheorem{lemma}[theorem]{Lemma}
\newtheorem{proposition}[theorem]{Proposition}
\newtheorem{corollary}[theorem]{Corollary}
\theoremstyle{definition}
\newtheorem{definition}[theorem]{Definition}
\newtheorem{example}[theorem]{Example}
\newtheorem{remark}[theorem]{Remark}
\title{Buffered Simulation Games for B\"uchi Automata}
\author{Milka Hutagalung, Martin Lange and Etienne Lozes  
\institute{School of Electr.\ Eng.\ and Computer Science, University of Kassel, Germany\thanks
{The European Research Council has provided financial support under the European Community's Seventh Framework Programme (FP7/2007-2013) / ERC grant agreement no 259267.}}}
\begin{document}

\maketitle

\begin{abstract}
Simulation relations are an important tool in automata theory because they provide 
efficiently computable approximations to language inclusion. In recent years, extensions
of ordinary simulations have been studied, for instance multi-pebble and multi-letter
simulations which yield better approximations and are still polynomial-time computable.

In this paper we study the limitations of approximating language inclusion in this way: we 
introduce a natural extension of multi-letter simulations called buffered simulations. They
are based on a simulation game in which the two players share a FIFO buffer of unbounded size. We consider two variants of these buffered games called continuous and look-ahead 
simulation which differ in how elements can be removed from the FIFO buffer. We show that 
look-ahead simulation, the simpler one, is already PSPACE-hard, i.e.\ computationally as hard 
as language inclusion itself. Continuous simulation is even EXPTIME-hard. We also provide 
matching upper bounds for solving these games with infinite state spaces.
\end{abstract}

\addtocounter{footnote}{-1}

\section{Introduction}
Nondeterministic B\"uchi automata (NBA) are an important formalism for the
specification and verification of reactive systems. 
While they have originally been
introduced as an auxiliary device in the quest for a decision procedure for 
Monadic Second-Order Logic \cite{buc62} they are by now commonly used
in such applications as
LTL software model-checking~\cite{EtessamiH00,spin-book},
or size-change 
termination analysis for recursive 
programs~\cite{size-change-termination,fogarty-vardi-2012}. Typical decision procedures 
from these domains then reduce to automata-theoretic decision problems like emptiness
or inclusion for instance \cite{Vardi96}.

While emptiness for B\"uchi automata is NLOGSPACE-complete, deciding inclusion between two
nondeterministic finite automata is already more difficult, namely PSPACE-complete \cite{STOC73*1}.
This is also the complexity of inclusion for NBA. Thus, it is -- given current knowledge -- 
exponential in the size of the involved NBA regardless of whether it is solved using 
explicit complementation \cite{FOCS88*319,thomas-salomaa,KupfermanV01} or other means 
\cite{AbdullaCCHHMV10,conf/tacas/FogartyV10}.
One major issue of automata manipulation is therefore to keep the number
of states as small as possible. 

Since the early works of Dill~et~al~\cite{DillHW91}, simulations
have been intensively used in automata-based verification. 
Unlike the PSPACE-hard problems like inclusion, simulation between two NBA is cheap to compute.
Simulations are interesting with respect
to several aspects.
On the one hand, they offer a sound, but incomplete, approximation of
language inclusion that may be sufficient in many practical cases.
On the other hand, simulations can be used for 
quotienting 
automata~\cite{Bustan:2003:SM:635499.635502,fairsimmini,EtessamiWS01}, 
for pruning transitions~\cite{AbdullaCCHHMV10,AbdullaCCHHMV11},
or for improving existing decision procedures on NBA like the Ramsey-based \cite{fogarty-vardi-2012} or the 
antichain algorithm for inclusion, resp.\ universality checking \cite{antichain-journal}. 

There is a simple game-theoretic characterisation of simulation between two NBA: two players
called Spoiler and Duplicator move two pebbles on the transition graph of the NBA, each of
them controls one pebble. In order to decide whether or not an NBA $\mathcal{A}$ is
simulated by an NBA $\mathcal{B}$, Spoiler starts with his pebble on the initial state of
$\mathcal{A}$ and moves it along a transition labeled with some alphabet symbol $a$. Duplicator 
starts with her pebble on the initial state of $\mathcal{B}$ and responds with a move along 
a transition labeled with the same letter. This proceeds ad infinitum. There are different
kinds of simulation depending on the winning conditions in these games. For instance, fair
simulation models the B\"uchi acceptance condition and requires Duplicator to have visited
infinitely often accepting states if Spoiler has done so. While it is close to the actual
condition on inclusion between these two automata, quotienting automata with respect to fair 
simulation does not preserve the automaton's language.

It is therefore that different winning conditions like delayed simulation have been invented
which require Duplicator to eventually visit an accepting states whenever Spoiler has visited
one \cite{EtessamiWS01}. They, however, do not necessarily provide better approximations to
language inclusion. Extensions of the plain simulation relation have been considered since,
in particular multi-pebble \cite{pebble} and multi-letter simulations \cite{MayrC13,HutagalungLL13}.  
Both try to alleviate the gap between simulation and language inclusion which shows up in the
game-theoretic characterisation as Spoiler being too strong: language inclusion would correspond
to a game in which player chooses an entire run in $\mathcal{A}$ and then Duplicator produces
one in $\mathcal{B}$ on the same word. In the simulation game, Spoiler reveals his run step-wise
and can therefore dupe Duplicator into positions from which she cannot win anymore even though
language inclusion holds.

The two extensions -- multi-pebble and multi-letter simulation -- use different approaches to 
approximate language inclusion better: multi-pebble simulation add a certain degree of imperfectness
to these games by allowing Duplicator to be in several positions at the same time. Multi-letter
simulation forces Spoiler to reveal more of his runs and therefore allows Duplicator to delay
her choices for a few rounds and therefore benefit from additional information she gained about
Spoiler's moves. The complexity of computing these extended simulations has been studied before:
both are polynomial for a fixed number of pebbles, respectively a fixed look-ahead in the multi-letter
games. However, nothing is known about the complexity of these simulations
if the number of pebbles/letters is not fixed.

\paragraph*{Contribution.}
This paper studies a natural extension of multi-letter games to unbounded look-aheads. We introduce a 
new family of simulation relations for B\"uchi automata, called buffered simulations. In a buffered 
simulation, Spoiler and Duplicator move two pebbles along automata transitions, but unlike in standard
simulations, Spoiler and Duplicator's moves do not always alternate. 
Indeed, Duplicator can ``skip her turn'' and wait to see Spoiler's next
moves before responding. 
Spoiler and Duplicator share a first-in first-out buffer:
every time Spoiler moves along an $a$-labelled transition, he adds an
$a$ into the buffer, whereas every time 
Duplicator makes a step along a $b$-labelled transition, she removes
a $b$ from the buffer. 
Since Duplicator has more chances to defeat Spoiler than in standard
simulations, buffered simulations better approximate language inclusion. They
also improve multi-letter simulations, and it is thus a natural
question to ask if they are polynomial time decidable and could be used
in practice.

We study two notions of buffered simulation games, called
continuous and look-ahead simulation games, respectively. Their rules
only differ in the way that Duplicator must use the buffer: in look-ahead
simulations, Duplicator is forced to flush the buffer, so that she 
``catches up''
with Spoiler every time she decides to make a move. Thus, the buffer is flushed
completely with each of Duplicator's moves. In the continuous case, Duplicator
can choose to only consume a part of the buffer with every move, and it need
not ever be flushed.


We show that these unbounded buffer simulation games -- whilst naturally extending
the ``easy'' multi-letter simulations -- provide in a sense a limit to the
efficient approximability of language inclusion: we show that look-ahead simulations
are already PSPACE-hard, i.e.\ as difficult as language inclusion itself, while
continuous simulations are even worse: they are EXPTIME-hard, i.e.\ presumably
even more difficult than language inclusion.

We also provide matching upper bounds in order to show that these lower bounds
are tight, i.e.\ these simulations problems are not worse than that. In particular,
look-ahead simulation is therefore as difficult as language inclusion, and continuous
simulation is ``only'' slightly more difficult. Decidability of these simulations is 
not obvious. In the finitary cases, it is provided by a rather straight-forward 
reduction to parity games but games with unbounded buffers would yield parity games 
of infinite size. Moreover, questions about systems with unbounded FIFO buffers are 
often undecidable; for instance, linear-time properties of a system of two machines 
and one buffer are known to be undecidable~\cite{CeceF05}. Decidability of these
simulation relations may therefore be seen as surprising, and it is also not
inconceivable that the decidability results for these unbounded FIFO buffer 
simulations may lead to developments in other areas, for instance reachability in
infinite-state systems etc.


\paragraph*{Outline.}
Section~\ref{sec:extsim} first recalls B\"uchi automata and ordinary simulation
relations. It then introduces continuous simulation as a simulation game extended
with an unbounded buffer. Look-ahead simulation is obtained by restricting the use
of the buffer in a natural way. Section~\ref{sec:complexity} contains the most
important results in these relations: it shows that look-ahead simulation is already
as hard as language inclusion whereas continuous simulation is even harder. 
Section~\ref{sec:upper} shows that these bounds are tight by introducing a suitable
abstraction called quotient game which yields corresponding upper bounds. Finally,
Section~\ref{sec:properties} collects further interesting results on these 
simulation relations like topological characterisations for instance and concludes
with comments on their use in automata minimisation.


\section{Extended Simulation Relations}
\label{sec:extsim}
\subsection{Background}
\paragraph*{Nondeterministic B\"uchi Automata.}
A non-deterministic B\"uchi automaton (NBA) 
is a tuple $\ba=(Q,\Sigma,\delta,q_0,F)$
where $Q$ is a finite set of states with $q_0$ being a designated 
starting state, $\delta\subseteq
Q\times \Sigma\times Q$ is a transition relation, and
$F\subseteq Q$ is a set of accepting states. 
A state $q \in Q$ is called a \emph{dead end} when there is no 
$a \in \Sigma$ and $q' \in Q$ such that $(q,a,q') \in \delta$.
If $w=a_1\dots a_{n}$, a 
sequence $q_0a_1q_1\dots q_n$ is called a $w$-path from $q_0$ to
$q_n$ if $(q_i,a_{i+1},q_{i+1})\in\delta$ for all $i\in\{0,\dots,n-1\}$. It is an \emph{accepting}
$w$-path if there is some $i\in\{1,\dots,n\}$ such that $q_i\in F$.  We write 
$q_0\ar{w} q_n$ to state that there is a $w$-path from $q_0$ to $q_n$, and 
$q_0\aar{w}q_n$ to state that there is an accepting one.

A \emph{run} of $\ba$ on a word $w=a_1a_2\dots\in \Sigma^{\omega}$ is an infinite
sequence $\rho=q_0 q_1 \dots$ such that $(q_i,a_{i+1},q_{i+1})\in \delta$ 
for all $i\geq 0$. The run is \emph{accepting} if there is some $q\in F$ 
such that
$q=q_i$ for infinitely many $i$.
The language of $\ba$ is the set $L(\ba)$ of infinite words for which
there exists an accepting run. 

\subsubsection*{Fair Simulation.}
Fair simulation~\cite{HenzingerKR02} is an extension of standard simulation
to B\"uchi automata. The easiest way of defining fair simulation 
is by means of a game
between two players called \emph{Spoiler} and \emph{Duplicator}. 
Let us fix two NBA $\ba =
(Q,\Sigma,\delta,q_I,F)$ and $\babis=(Q',\Sigma,\delta',q_I',F')$.
Spoiler and Duplicator are each given a pebble that is initially placed
on $q_{0}:=q_I$ for Spoiler and $q'_{0}:=q_I'$ for Duplicator. Then, on each round 
$i\geq1$,
\begin{enumerate}
 \item Spoiler chooses a letter $a_i \in \Sigma$ and a transition $(q_{i-1},a_i,q_i) \in \delta$, and moves his 
 pebble to $q_i$;
 \item Duplicator responds by choosing a transition $(q'_{i-1},a_i,q'_i) \in \delta'$ and moves his pebble to $q'_i$.
\end{enumerate}
Either the play terminates because one player reaches a dead end, 
and then the opponent wins the play. Or the game produces two
infinite runs $\rho = q_0a_1q_1,\ldots$ and 
$\rho' = q'_0a_1q'_1\ldots$, in which case Duplicator is declared the winner of the play 
if $\rho$ is not accepting or $\rho'$ is accepting. Otherwise Spoiler wins this play.

We say that $\ba$ is \emph{fairly simulated} by $\babis$, written $\ba\sqsubseteq^{\fair}\babis$,
if Duplicator has a winning strategy for this game. 
Clearly, $\ba\sqsubseteq^{\fair}\babis$
implies $L(\ba)\subseteq L(\babis)$, but the converse does not hold 
in general. 

\begin{remark}
Notice that standard simulation, as defined for labelled transition systems,
is a special case of fair
simulation. Indeed, for a given labelled transition system
$(Q,\Sigma,\delta)$, and a given state $q$, we can define the NBA $\ba(q)$ with
$q_I:=q$ as the initial state, and $F:=Q$ as the set of accepting states.
Then $q'$ simulates $q$ in the standard sense (without taking care
of fairness) if and only if $\ba(q)\sqsubseteq^{\fair}\ba(q')$. We write
$q\sqsubseteq q'$ when $q'$ simulates $q$ in the standard sense.
\end{remark}


\subsection{Continuous Simulation\label{sec:continuous}}
Continuous simulations are defined by games in which 
Duplicator is allowed to see in advance some
finite but \emph{unbounded} number of Spoiler's moves. This naturally extends recent work
on extensions of fair simulation called \emph{multi-letter} or \emph{look-ahead simulations} in
which Duplicator is allowed to see a number of Spoiler's moves that is \emph{bounded} by a
constant \cite{HutagalungLL13,MayrC13}.

Let $\ba=(Q,\Sigma,\delta,q_I,F)$ and $\babis=(Q',\Sigma,\delta',q_I',F')$ be two NBA.  In the 
continuous fair simulation game, Spoiler and Duplicator now share a FIFO buffer $\beta$
and move two pebbles through the automata's state spaces.  The positions of the pebbles form 
a word $w$ and two runs $\rho$ and $\rho'$, obtained by successively extending sequences
$\rho_i$ and $\rho'_i$ in each round $i$ with zero or more states. At the beginning we have 
$\rho_0 := q_I$ and $\rho_0' := q'_I$, i.e.\ Spoiler's pebble is on $q_I$ and Duplicator's
pebble is on $q_I'$. Initially, both word and buffer are empty, i.e.\ we have 
$w_0 := \epsilon$ and $\beta_0 := \epsilon$.

For the $m$-th round, with $m \ge 1$ suppose that $w_{m-1} = a_1,\ldots,a_{m-1}$, 
$\rho_{m-1} = q_0,\ldots,q_{m-1}$, $\rho'_{m-1} = q_0',\ldots,q_{m'}$ and 
$\beta_{m-1}$ have
been created already. Duplicator's run in $\babis$
is shorter than Spoiler's run, i.e. $m' \le m$.
Furthermore, the buffer $\beta$ contains the suffix $a_{m'+1},\dots,a_{m}$
of $w_{m-1}$ that Duplicator has not mimicked yet. The $m$-th round then 
proceeds as follows.

\begin{enumerate}
 \item Spoiler chooses a letter $a_{m} \in \Sigma$ and a transition 
 $(q_{m-1},a_{m},q_{m}) \in \delta$ and moves the 
 pebble to $q_{m}$, i.e.\ we get $w_{m} := w_{m-1} a_{m}$ and $\rho_m: = \rho_{m-1} q_{m+1}$. The 
letter $a_{m}$ is added to the buffer, i.e.\ $\beta' := \beta, a_{m}$. 
 \item Suppose we now have $\beta' = b_1, \ldots, b_k$. Duplicator picks some $r$ with 
$0 \le r \le k$ as well as states $q'_{m'},\ldots,q'_{m'+r-1}$ such that
$(q_{m'+i-1},b_i,q_{m'+i}) \in \delta'$ for all $i=1,\ldots,r$. Then we get 
$\rho'_{m} := \rho'_{m-1}, q_{m'+1},$ $\ldots, q_{m'+r}$. The letters get flushed from the buffer, i.e.\ 
$\beta_i := b_{r+1},\ldots,b_k$. 

Note that we have $\rho'_m = \rho'_{m-1}$ if Duplicator chooses $r=0$. In this case we also say
that she \emph{skips her turn}.
\end{enumerate}
A play of this game
defines a finite or infinite run $\rho = q_0,q_1,\ldots$ for Spoiler (finite
if Spoiler reaches a dead end), and a
finite or infinite run $\rho' = q'_0,q'_1,\ldots$ for Duplicator 
(finite if Duplicator eventually always skips her turn) on the finite or infinite word 
$w = a_1 a_2 \ldots$. 

Duplicator is declared the winner of the play if
\begin{itemize}
\item $\rho$ is finite, or
\item $\rho$ is infinite (and $w$ is necessarily infinite as well) and 
         \begin{itemize}
           \item $\rho$ is not an accepting run on $w$, or 
           \item $\rho'$ is infinite and an accepting run on $w$.
         \end{itemize}
\end{itemize}
In all other cases, Spoiler wins the play.

We say that $\babis$ \emph{continuously fairly simulates} $\ba$, 
written $\ba\sqsubseteq_{\cont}^{\fair}\babis$, if Duplicator has a winning strategy for the 
continuous fair simulation game on $\ba$ and $\babis$. We also
consider the (unfair) continuous simulation $\sqsubseteq_{\cont}$
for pairs of LTS states by considering an LTS with a distinguished state
as an NBA where all states are accepting.

\begin{example}
\label{ex:branching}
Consider the following two NBA $\ba$ (left) and $\babis$ (right) over
the alphabet $\Sigma=\{a,b,c\}$.
\begin{center}
\begin{tikzpicture}[initial text={}, node distance=12mm, every state/.style={minimum size=5mm}, thick]
  \node[state,initial]   (qa0)                      {};
  \node[state]           (qa1) [right of=qa0]       {};
  \node                  (aux1) [right of=qa1]      {};
  \node[state,accepting] (qa2) [above of=aux1, node distance=4mm] {};
  \node[state,accepting] (qa3) [below of=aux1, node distance=4mm] {};

  \path[->] (qa0) edge              node [above]      {$a$}     (qa1)
            (qa1) edge              node [above left] {$b$}     (qa2)
                  edge              node [below left] {$c$}     (qa3)
            (qa2) edge [loop right] node              {$\Sigma$} ()
            (qa3) edge [loop right] node              {$\Sigma$} ()
            (qa1) edge [loop above] node              {$a$}      ();      

  \node[state,initial]   (qb0) [right of=qa1, node distance=5cm] {};
  \node                  (aux2) [right of=qb0]                   {};
  \node[state]           (qb1) [above of=aux2, node distance=4mm] {};
  \node[state,accepting] (qb2) [right of=qb1]                    {};
  \node[state]           (qb3) [below of=aux2, node distance=4mm] {};
  \node[state,accepting] (qb4) [right of=qb3]                    {};

  \path[->] (qb0) edge              node [above left] {$a$}     (qb1)
                  edge              node [below left] {$a$}     (qb3)
            (qb1) edge              node [above]      {$b$}     (qb2)
            (qb2) edge [loop right] node              {$\Sigma$} ()
            (qb3) edge              node [below]      {$c$}     (qb4)
            (qb4) edge [loop right] node              {$\Sigma$} ()
            (qb1) edge [loop above] node              {$a$} ()
            (qb3) edge [loop below] node              {$a$} ();

\end{tikzpicture}
\end{center}
Clearly, we have $L(\ba) \subseteq L(\babis)$.

Duplicator has a winning strategy for the continuous fair 
simulation game on this pair of automata: she skips her turns until Spoiler 
follows either the $b$- or the $c$-transition. However, if we ignore the accepting
states and consider these automata as a transition system,
then Spoiler has a winning strategy for the continuous simulation:
he iterates the $a$-loop, and then either Duplicator waits forever
and loses the play, or she makes a move and it is then easy for
Spoiler to defeat her.

This example also shows that continuous fair simulation strictly extends multi-letter fair
simulation which can be seen as the restriction of the former to a bounded buffer. I.e.\ in these
games, Duplicator can only benefit from a fixed look-ahead of at most $k$ letters for some $k$.
It is not hard to see that Spoiler wins the game with a bounded buffer of length $k$ for any $k$
on these two automata: he simply takes $k$ turns on the $a$-loop in $\ba$ which forces
Duplicator to choose a transition out of the initial state in $\babis$. After doing so, Spoiler
can choose the $b$- or $c$-transition that is not present for Duplicator anymore and make her
get stuck.
\end{example}


\subsection{Look-Ahead Simulations}
We now consider a variant of the continuous simulation games 
called \emph{look-ahead} simulation games (the terminology follows \cite{MayrC13}). 
Look-ahead simulation games proceed
exactly like the continuous ones, except that now Duplicator
has only two possibilities: either she skips her turn, or she flushes 
the entire buffer. Formally, the definition of the game
only differs from the one of Section~\ref{sec:continuous} in that the 
number $r$ of letters removed by Duplicator in a round is either $0$ or
the size $|\beta|$ of the current buffer $\beta$, whereas continuous simulation
allowed any $r\in\{0,\dots,|\beta|\}$.

We write $\ba\sqsubseteq_{\look}^{\fair}\babis$ 
if Duplicator has a winning strategy for the look-ahead fair simulation 
on the two automata $\ba,\babis$. 
Similarly, we define the look-ahead fair simulation for LTS, 
$\sqsubseteq_{\look}$.

\begin{example}
Consider again $\ba$ and $\babis$ as in Example~\ref{ex:branching}.
It holds that $\ba\sqsubseteq_{\look}^{\fair}\babis$,
because Duplicator can flush the buffer once she has seen the first $b$ or $c$.
\end{example}

Clearly, look-ahead simulation implies continuous simulation but the converse does not hold.

\begin{example}
Consider the following two NBA $\ba$ (left) and $\babis$ (right) 
over the alphabet
$\Sigma = \{a,b,c\}$.
\begin{center}
\begin{tikzpicture}[initial text={}, node distance=12mm, every state/.style={minimum size=5mm}, thick]
  \node[state,initial]   (qa0)                      {};
  \node[state,accepting] (qa1) [right of=qa0]       {};

  \path[->] (qa0) edge              node [above]      {$a$}     (qa1)
            (qa1) edge [loop right] node [right]      {$b,c$}     ();

  \node[state,initial]   (qb0) [right of=qa1, node distance=5cm] {};
  \node                  (aux2) [right of=qb0]                   {};
  \node[state,accepting] (qb1) [above of=aux2, node distance=6mm] {};
  \node[state,accepting] (qb3) [below of=aux2, node distance=6mm] {};

  \path[->] (qb0)                    edge              node [above left] {$a$}     (qb1)
                                     edge              node [below left] {$a$}     (qb3)
            (qb1)                    edge [loop right] node [right]      {$b$}     ()
            ([xshift=-1mm]qb1.south) edge              node [left]       {$b$}     ([xshift=-1mm]qb3.north)
            (qb3)                    edge [loop right] node [right]      {$c$}     ()
            ([xshift=1mm]qb3.north)  edge              node [right]      {$c$}     ([xshift=1mm]qb1.south);
\end{tikzpicture}
\end{center}
Duplicator wins the continuous fair simulation: a 
winning strategy for Duplicator 
is to skip her first turn, and then to remove one letter at a time during the
rest of the play. Thus, after each round, the buffer always contains exactly one element.

On the other hand, Spoiler wins the look-ahead simulation,
because the first time Duplicator flushes the buffer, she has committed
to a choice between the two right states and thus makes a prediction
about the next letter that Spoiler will play.
\end{example}

\begin{remark}
Multi-pebble simulations~\cite{pebble} are another notion of 
simulation in which duplicator is given more than just one pebble,
which she can move, duplicate, and drop during the game. If the number
of such pebbles is not bounded, 
multi-pebble simulations better approximate language inclusion
than continuous and look-ahead simulation; in particular,  
the look-ahead simulation game corresponds to the multi-pebble simulation
game in which duplicator is required to drop all but one pebble infinitely
often. 
\end{remark}


\section{Lower Bounds: The Complexity of Buffered Simulations}
\label{sec:complexity}
\label{sec:exptime}
The difficulty of deciding continuous and look-ahead simulation is shown by reduction from suitable tiling
problems.

\begin{definition}
A tiling system is a tuple $\mathcal{T} = (T,\mathbin{H},\mathbin{V},t_I,t_F)$,
where $T$ is a set of tiles, 
$\mathbin{H},\mathbin{V} \subseteq T \times T$ are the horizontal and vertical 
compatibility relations, $t_I,t_F \in T$ are the initial and final tiles.

Let $n,m$ be two natural numbers. A tiling with $n$ columns and $m$ rows
according to $\mathcal T$
is a function $t:\{1,\dots,n\}\times\{1,\dots,m\}\to T$; the tiling is valid 
if (1) $t_{1,1}=t_I$ and $t_{n,m} = t_F$, 
(2) for all $i=1,\dots,n-1$ and all $j=1,\dots,m$ we have 
$(t_{i,j},t_{i+1,j})\in\mathbin{H}$, 
(3) for all $i=1,\dots,n$, for all $j=1,\dots, m-1$ we have 
$(t_{i,j},t_{i,j+1})\in \mathbin{V}$.
\end{definition}
 
The problem of deciding whether there exists a valid tiling with $n$ columns 
and $2^n$ rows, for a given $n$ in unary and a tiling system $\mathcal T$, is 
known to be PSPACE-hard \cite{Boas97theconvenience}.\footnote{The requirement on the final tile
for instance is not needed for PSPACE-hardness but this variant of the tiling problem is most
convenient for the reductions presented here.} Clearly, the problem to decide whether there is
no such tiling is equally PSPACE-hard. We reduce the complement of the tiling problem to 
look-ahead buffered simulation.

\begin{figure}[t]
\begin{tikzpicture}[every state/.style={minimum size=5mm}, thick]
                            
\begin{scope}[node distance=3cm]
  \node[state,initial,initial text={$\ba$}]          (0)  {};
  \node[state]           (01) [right of=0,node distance=1.5cm]     {};
  \node[state]           (1) [right of=01,node distance=1.5cm]     {};
  \node[state] (2) [right of=1] {};
  \node[state] (3) [right of=2] {};
  \node[state] (34) [right of=3] {};
  \node[state] (4) [right of=34,node distance=1.5cm] {};

  \path[->] 
  (0) edge node [above] {\$}     (01)
  (01) edge node [above] {$t_I,0$} (1)
  (1) edge node [above] {$(T\times \{0\})^{n-1}$}     (2)
  (2) edge [bend left] node [above] {\$} (3)
  (3) edge [bend left] node [below] {$(T\times \{0,1\})^{n}~\setminus~(T\times \{1\})^{n}$}  (2)
  (3) edge node [above] {$(T\times \{1\})^{n-1}$}     (34)
  (34) edge node [above] {$t_F,1$}     (4)
  (4) edge [loop right] node {$\#$} ();
\end{scope}

\begin{scope}[node distance=3cm]
  \node[state,initial,initial text={$\babis$}]  (3) [below of=0,node distance=2cm] {}; 
  \node[state] (4) [right of=3] {\tiny $q_0$};
  \node[state] (qt0) [below of=4,node distance=4.5cm] {\tiny $q_1$};
  \node[state] (qtt0) [below of=qt0,node distance=1.8cm] {\tiny $q_r$};
  \node[state] (qt1) [right of=qt0,node distance=1cm] {};
  \node[state] (qtt1) [right of=qtt0,node distance=1cm] {};
  \node[state] (qt2) [right of=qt1,node distance=6cm] {};
  \node[state] (qtt2) [right of=qtt1,node distance=6cm] {};
  \node[state] (5) [right of=4,node distance=2cm] {};
  \node[state] (51) [right of=5,node distance=2cm] {};
  \node[state] (52) [right of=51,node distance=2cm] {};
  \node[state] (6) [below of=5,node distance=1cm] {};
  \node[state] (61) [right of=6,node distance=2cm] {};
  \node[state] (62) [right of=61,node distance=2cm] {};
  \node[state] (7) [below of=6,node distance=2cm] {};
  \node[state] (8) [below of=61,node distance=1cm] {};
  \node[state] (9) [below of=8,node distance=1cm] {};
  \node[state] (81) [right of=8,node distance=2cm] {};
  \node[state] (82) [right of=81,node distance=2cm] {};
  \node[state] (91) [right of=9,node distance=2cm] {};
  \node[state] (92) [right of=91,node distance=2cm] {};

  \path[->] 
  (3) edge [loop above] node [above] {$\Sigma\setminus\{\#\}$} ()
  (3) edge node [above] {\$} (4)
  (3) edge [bend right] node [right] {$t_1,\_$} node [below left] {$\vdots$} (qt0)
  (3) edge [bend right] node [left] {$t_r,\_$} (qtt0)
  (qt0) edge node [above] {$\Sigma^n$}  (qt1)
  (qtt0) edge node [above] {$\Sigma^n$}  (qtt1)
  (qt1) edge node [above] {$\{(t,i)\mid(t_1,t)\not\in V,i\in\{0,1\}\}$}  (qt2)
  (qtt1) edge node [above] {$\{(t,i)\mid(t_r,t)\not\in V,i\in\{0,1\}\}$}  (qtt2)
  (qt2) edge [loop right] node [right] {$\Sigma$}     ()
  (qtt2) edge [loop right] node [right] {$\Sigma$}     ()
  (qt0) [draw] -- ($(qt0)+(0,-.8cm)$) -- node [above] {$\{(t,i)\mid(t_1,t)\not\in H,i\in\{0,1\}\}$} ($(qt2)+(0,-.8cm)$) 
  ($(qt2)+(0,-.8cm)$) edge (qt2)
  (qtt0) [draw] -- ($(qtt0)+(0,-.8cm)$) -- node [above] {$\{(t,i)\mid(t_r,t)\not\in H,i\in\{0,1\}\}$} ($(qtt2)+(0,-.8cm)$) [draw,->] -- (qtt2)
  (4) edge [loop above] node [above] {$\_,1$} ()
  (4) edge node [above] {$\_,0$} (5)
  (5) edge node [above] {$\Sigma^n$} (51)
  (51) edge node [above] {$\_,0$} (52)
  (52) edge [loop right] node [right] {$\Sigma$} (52)
  (4) edge [bend right] node [above right] {$\_,1$} (6)
  (6) edge node [above] {$\Sigma^n$} (61)
  (61) edge node [above] {$\_,1$} (62)
  (62) edge [loop right] node [right] {$\Sigma$} (62)
  (4) edge [bend right] node [left] {$\_,0$} (7)
  (7) edge [loop above] node [above] {$\Sigma\setminus\{\$\}$} ()
  (7) edge node [above] {$\_,0$} (8)
  (7) edge node [above] {$\_,1$} (9)
  (8) edge node [above] {$\Sigma^n$} (81)
  (81) edge node [above] {$\_,1$} (82)
  (82) edge [loop right] node [right] {$\Sigma$} ()
  (9) edge node [above] {$\Sigma^n$} (91)
  (91) edge node [above] {$\_,0$} (92)
  (92) edge [loop right] node [right] {$\Sigma$} ()
  
  ;
\end{scope}
\end{tikzpicture}
\caption{Automata $\ba$ and $\babis$ 
used in the proof of Theorem~\ref{pspace}.}
\label{fig:autpspace}
\end{figure}

\begin{theorem}\label{pspace}
Deciding $\sqsubseteq_{\look}$ (resp. $\sqsubseteq^{\fair}_{\look}$) 
is PSPACE-hard.
\end{theorem}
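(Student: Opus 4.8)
The plan is to reduce the complement of the bounded tiling problem of the preceding definition -- ``does $\mathcal{T}$ admit \emph{no} valid tiling with $n$ columns and $2^n$ rows?'', which is PSPACE-hard -- to the question ``$\ba\sqsubseteq_{\look}\babis$?'' using the two automata of Figure~\ref{fig:autpspace} over the alphabet $\Sigma = (T\times\{0,1\})\cup\{\$,\#\}$. The encoding writes a candidate tiling row by row, rows separated by \$, each tile paired with one counter bit so that the $n$ bits of a row, read from column $1$ (least significant) to column $n$, spell the binary index of that row. Thus the first row carries counter $0$ (all bits $0$, first tile $t_I$) and the last row carries counter $2^n-1$ (all bits $1$, last tile $t_F$). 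I would first observe that $\ba$ emits exactly the \emph{shape-correct} candidates of this form followed by $\#^\omega$ (its middle rows are forced to contain a $0$-bit, since the edge out of its inner loop excludes $(T\times\{1\})^n$), while $\babis$ is built to admit a surviving run on a shape-correct word precisely when that word is \emph{not} a valid tiling, i.e.\ exhibits a concrete defect.

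Next I would verify that $\babis$ detects every defect and no spurious ones. Its branches split into three checks. The states $q_1,\dots,q_r$ (one per tile) catch a compatibility violation: after guessing a position holding a tile $t_\kappa$, $\babis$ either reads the immediately following symbol and sinks into a $\Sigma$-loop iff it is an $H$-incompatible tile, or reads $\Sigma^n$ and then the vertical neighbour and sinks iff it is $V$-incompatible. The offset is exact because each row contributes $n$ tiles plus one separator, so that precisely $n$ symbols lie strictly between a tile and its vertical neighbour independently of the column $c$; and since the $H$-check reads the immediately adjacent symbol, which is $\$$ at a row boundary, it can only compare in-row neighbours. The states reached through the $\$$-edge to $q_0$ (states $4$--$9$) verify the binary increment between consecutive rows: the $(\_,1)$-loop skips the block of low-order $1$s, isolating the increment position $j$ (the first $0$-bit), after which one branch sinks if bit $j$ fails to flip to $1$, one if a low-order $1$-bit fails to flip to $0$, and two further branches sink if a high-order bit (position $>j$) changes. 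I would check case by case that a shape-correct $w$ reaches a sink iff it has the corresponding defect, so $w\#^\omega\in L(\babis)$ iff $w$ is invalid.

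With this in hand the two directions follow. For the backward direction I would use that look-ahead simulation implies language inclusion: if a valid tiling $w$ exists then $w\#^\omega\in L(\ba)$ but $w\#^\omega\notin L(\babis)$, so $L(\ba)\not\subseteq L(\babis)$ and hence $\ba\not\sqsubseteq_{\look}\babis$. Here one must argue that $\babis$ cannot survive the $\#$-tail of a valid encoding without reaching a sink; the only delicate point is the non-sink $(\Sigma\setminus\{\$\})$-loop at state $7$, which admits $\#$, but state $7$ is entered only on a $0$-bit and cannot cross a $\$$, so it is never reached in the all-$1$s last row and cannot carry a run into the $\#$-tail. For the forward direction I would give Duplicator an explicit look-ahead strategy when no valid tiling exists: she keeps skipping, letting the buffer grow, until she has seen enough of the candidate to pin down a concrete defect together with the $n{+}1$-symbol window its check needs, then performs a single flush whose $\babis$-path runs the initial loop up to the defect, takes the defect branch into a $\Sigma$-loop sink, and consumes the remaining buffer there; thereafter she flushes every round inside the sink, producing an infinite run. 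A defect always exists: a finite play is an invalid candidate, and an infinite play keeps emitting rows whose $n$-bit counters lie in $[0,2^n-2]$ and hence cannot increment forever, forcing a counter defect (in the fair reading $\sqsubseteq^{\fair}_{\look}$ this infinite case is in fact automatic, since Spoiler never visits the accepting $\#$-loop state). The fair version is handled by the same automata after declaring the $\#$-loop state of $\ba$ and the $\Sigma$-loop sinks of $\babis$ accepting.

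The main obstacle I anticipate is the correctness of the counter gadget (states $4$--$9$) together with the exact-offset bookkeeping: one must show that isolating the increment position via the $(\_,1)$-loop is sound, that the four branches jointly characterise ``$\mathrm{row}_{k+1}=\mathrm{row}_k+1$'' with neither gaps nor false positives, and that every $\Sigma^n$ lines $\babis$ up on genuinely column-aligned, row-adjacent cells despite the $\$$ separators. The second subtle point, settled by the ``skip, then one big flush into a sink'' strategy, is confirming that the flush-all discipline of look-ahead suffices for Duplicator -- that a single $\babis$-path can consume the whole buffer through exactly one defect branch -- so that no partial flush is ever needed.
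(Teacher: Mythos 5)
Your proposal is correct and follows essentially the same route as the paper's proof: the same reduction from the complement of the $n\times 2^n$ tiling problem, the same two automata of Figure~\ref{fig:autpspace}, Spoiler playing a valid tiling's encoding when one exists (your language-inclusion phrasing of this direction is equivalent to the paper's observation that Duplicator cannot read $\#$ from the initial state), and Duplicator waiting to identify a counter or compatibility defect and then flushing into a $\Sigma$-sink when none exists. Your added detail on the increment gadget and the state-$7$ loop is a faithful elaboration of what the paper leaves implicit.
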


\begin{proof}
Given a tiling system $\mathcal{T} = (T,H,V,t_I,t_F)$
and an $n \in \Nat$, we consider the alphabet $\Sigma:= (T\times \{0,1\})\cup\{\$,\#\}$.
We  define the two automata $\ba$, $\babis$ as depicted on
Figure~\ref{fig:autpspace}, where all states are accepting.
The sizes of $\ba$, $\babis$
are polynomial in $|T|+n$.
Let us consider first the automaton $\ba$.
A word accepted by $\ba$ is composed of blocks of $n$ tiles separated
by the $\$$ symbol, such that each block is tagged
with the binary representation of a number in $\{0,\dots,2^n-1\}$.
We take as a convention that the first bit is the least significant one.
Either the word contains finitely many blocks, in which case, the word ends with
the symbol $\#$ repeated infinitely often, or it contains infinitely many blocks.
Moreover, the first block is tagged with 0, and the last one, if it exists, is
tagged with $2^n-1$ and it is the only one that may be tagged with $2^n-1$.

Consider now the automaton $\babis$. From state $q_0$, the automaton accepts a 
word if the two first blocks are not tagged with consecutive numbers.
From the state $q_i$, the automaton accepts a word if either it starts with a
tile that is not horizontally compatible with $t_i$, or if after $n$ symbols
it contains a tile that is not vertically compatible with $t_i$.

The claim is that $\ba\sqsubseteq_{\look}\babis$ (resp. $\ba\sqsubseteq_{\look}^{\fair}\babis$) if and only if there is no valid $n\times 2^n$ tiling.
Assume first that a valid tiling exists. Then Spoiler wins if he plays
the word that contains in the $i$-th block the $i$-th row of the tiling tagged
with the binary representation of $i$. Note that Duplicator cannot
loop forever in the initial state because she cannot read the $\#$ symbol.
Conversely, assume there is
no valid tiling. Then Duplicator wins if she waits
until she has seen at most $2^n+1$ blocks: either two blocks are not  
tagged with consecutive numbers, or Spoiler played exactly $2^n$ blocks
but these do not code a valid tiling. In the former, Duplicator
then accepts by moving to $q_0$ at the beginning of the first ill-tagged block,
and in the later, she wins by moving to $q_i$ after having read a
tile $t_i$ whose horizontal or vertical successor does not match.
\end{proof}

\begin{figure}[t]
\center
\begin{tikzpicture}[initial text={}, node distance=12mm, every state/.style={minimum size=5mm}, thick]
    \node[state, initial below] (t1) {\tiny $q_{t_1}$};
    \node[state] (t2) [below of =t1, node distance = 2cm] {\tiny $q_{t_2}$};
    \node[state] (t3) [below of = t2,node distance = 2cm] {\tiny $q_{t_3}$};
    
    \node[state] (0t1) [left of = t1] {};
    \node[state] (0t2) [left of =t2] {};
    \node[state] (0t3) [left of =t3] {};
    
    \node[state] (ft1) [left of = 0t1] {};
    \node[state] (ft2) [left of =0t2] {};
    \node[state] (ft3) [left of =0t3] {};
    \node (label) [above of= ft1, node distance = 1.8cm] {(P1)};
    \node (label) [above of= t1, node distance = 1.8cm] {(P4), (P5)};    

    \path[->]
       (t1) edge [bend left =20 ]             node [right] {$0$}     (t2)
      (t2) edge [bend left =20 ]             node [left] {$0$}     (t1)
      (t2) edge [bend left =20 ]             node [right] {$0$}     (t3)
      (t1) edge [loop above]             node [above] {$T \backslash t_F,1$}     ()
      (t2) edge [loop below]             node [left] {$T \backslash t_F,1$}     ()
      (t3) edge [loop below]             node [below] {$T \backslash t_F,1$}     ()
      (t1) edge              node [above] {$0$}     (0t1)
      (t2) edge              node [above] {$0$}     (0t2)
      (t3) edge              node [above] {$0$}     (0t3)
      (0t1) edge              node [above] {$T \backslash t_1$}     (ft1)
      (0t2) edge              node [above] {$T \backslash t_2$}     (ft2)
      (0t3) edge              node [above] {$T \backslash t_3$}     (ft3)
      (ft1) edge [loop above]             node [above] {$\Sigma$}     ()
      (ft2) edge [loop above]             node [above] {$\Sigma$}     ()
      (ft3) edge [loop below]             node [below] {$\Sigma$}     ();    
    
    \node[state]  (init_a) [right of=t1, node distance =3cm] {}; 
    \node[state]          (21) [right of=init_a] {};
    \node[state]          (11) [above of=21] {};
    \node[state]          (12) [right of=11] {};
    \node[state]          (13) [right of=12] {};
    \node[state]          (14) [right of=13] {};
    \node[state]          (22) [right of=21] {};
    \node[state]          (23) [right of=22] {};
    \node[state]          (24) [right of=23] {};
    \node[state]          (31) [below of=21] {};
    \node[state]          (32) [right of=31] {};
    \node[state]          (33) [right of=32] {};
    \node[state]          (34) [right of=33] {};
    \node[state]	   (f) [right of=24] {};
    \node (label) [above of= init_a, node distance = 1.8cm] {(P3)};

    \path[->] 
  (init_a) edge              node [above] {$t_1$}     (11)
  (init_a) edge              node [above] {$t_2$}     (21)
  (init_a) edge              node [above] {$t_3$}     (31)
  (11) edge              node [above] {$T,0$}     (12)
  (12) edge              node [above] {$T,0$}     (13)
  (13) edge              node [above] {$T,0$}     (14)
  (14) edge              node [above] {$\bar{v}_{t_1}$}     (f)
  (21) edge              node [above] {$T,0$}     (22)
  (22) edge              node [above] {$T,0$}     (23)
  (23) edge              node [above] {$T,0$}     (24)
  (24) edge              node [above] {$\bar{v}_{t_2}$}     (f)
  (31) edge              node [above] {$T,0$}     (32)
  (32) edge              node [above] {$T,0$}     (33)
  (33) edge              node [above] {$T,0$}     (34)
  (34) edge              node [below] {$\bar{v}_{t_3}$}     (f)
  (f) edge [loop above]             node [above] {$\Sigma$}     ();

    \node[state]  (init_b) [right of=t3, node distance=3cm] {}; 
    \node[state]          (21) [right of=init_b] {};
    \node[state]          (11) [above of=21] {};
    \node[state]          (12) [right of=11] {};
    \node[state]          (13) [right of=12] {};
    \node[state]          (14) [right of=13] {};
    \node[state]          (22) [right of=21] {};
    \node[state]          (23) [right of=22] {};
    \node[state]          (24) [right of=23] {};
    \node[state]          (31) [below of=21] {};
    \node[state]          (32) [right of=31] {};
    \node[state]          (33) [right of=32] {};
    \node[state]          (34) [right of=33] {};
    \node[state]	   (f) [right of=24] {};
    \node (label) [above of= init_b, node distance = 1.8cm] {(P2)};

    \path[->] 
  (init_b) edge              node [above] {$t_1$}     (11)
  (init_b) edge              node [above] {$t_2$}     (21)
  (init_b) edge              node [above] {$t_3$}     (31)
  (11) edge              node [above] {$T,1$}     (12)
  (12) edge              node [above] {$T,1$}     (13)
  (13) edge              node [above] {$T,1$}     (14)
  (14) edge              node [above] {$T \backslash t_1$}     (f)
  (21) edge              node [above] {$T,1$}     (22)
  (22) edge              node [above] {$T,1$}     (23)
  (23) edge              node [above] {$T,1$}     (24)
  (24) edge              node [above] {$T \backslash t_2$}     (f)
  (31) edge              node [above] {$T,1$}     (32)
  (32) edge              node [above] {$T,1$}     (33)
  (33) edge              node [above] {$T,1$}     (34)
  (34) edge              node [below] {$T \backslash t_3$}     (f)
  (f) edge [loop above]             node [above] {$\Sigma$}     ()
  (t1) edge              node [above] {$\epsilon$}     (init_a)
  (t1) edge              node [above] {$\epsilon$}     (init_b)
  (t2) edge              node [above] {$\epsilon$}     (init_a)
  (t2) edge              node [above] {$\epsilon$}     (init_b)
  (t3) edge              node [above] {$\epsilon$}     (init_a)
  (t3) edge              node [above] {$\epsilon$}     (init_b);

\end{tikzpicture}
\caption{The NBA $\babis$ from the construction in the proof of Thm.~\ref{exptime} for $m=3$ and the
tiling system $\mathcal T=(T,H,V,t_1,t_3)$ where $T=\{t_1,t_2,t_3\}$, $H=\{(t_1,t_1),(t_1,t_3),(t_3,t_3)\}$, and $V=\{(t_1,t_2),(t_2,t_1),(t_2,t_3)\}$. $\bar v_t$
denotes the set of tiles that are not vertically compatible with $t$.}
\label{fig:autexptime}
\end{figure}
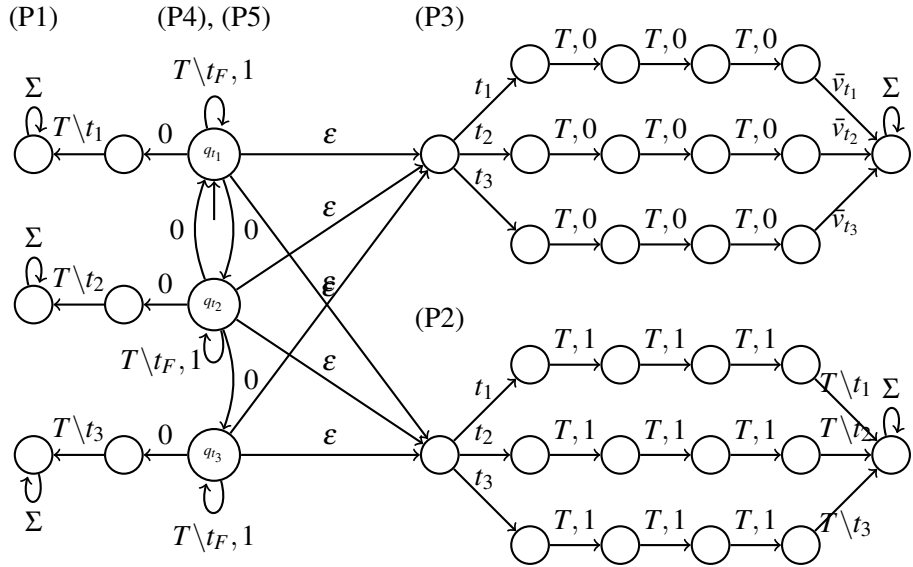

In order to establish an even higher lower bound for the continuous game we consider an EXPTIME-hard game-theoretic variant
of the tiling problem on some tiling system $\mathcal T$. The game is played by two players: \emph{Starter} and \emph{Completer}. 
The task for Completer is to produce a valid tiling, whereas Starter's goal is to 
make it impossible. On every round $i\geq 1$, 
\begin{enumerate}
\item Starter selects the tile $t_{1,i}$ starting the $i$-th row; 
if $i=1$, then $t_{1,i} = t_I$, otherwise $(t_{1,i-1},t_{1,i}) \in V$.
\item Completer selects the 
tiles $t_{2,i},\dots t_{n,i}$ completing the $i$-th row; 
$(t_{1,i},t_{2,i}), \ldots, (t_{n-1,i},t_{n,i}) \in H$,
and $(t_{2,i-1},t_{2,i}), \ldots, (t_{n,i-1},t_{n,i}) \in V$.
\end{enumerate}
If one of the players gets stuck, the opponent wins. 
Otherwise Completer wins iff there are $i,j$ such that $t_{i,j} = t_F$.
The problem of deciding whether there exists a winning strategy for Starter
in this tiling game is known to be EXPTIME-hard \cite{Chlebus86,Boas97theconvenience}. Equally, deciding whether there
is no winning strategy for him is EXPTIME-hard and -- since the games are easily seen to be determined -- so is the
problem of deciding whether or not Completer has a winning strategy. This distinction is important because, as in the
previous construction, we will reduce the complement of the tiling game problem to continuous buffered simulation. In
other words, we present a reduction from one game to another in which the players' roles are inverted. Thus, Starter in
the tiling game corresponds to Duplicator in the simulation game, and so do Completer and Spoiler.

\begin{theorem}\label{exptime}
Deciding $\sqsubseteq_{\cont}$ (resp. $\sqsubseteq^{\fair}_{\cont}$) 
is EXPTIME-hard.
\end{theorem}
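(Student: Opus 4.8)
The plan is to reduce from the EXPTIME-hard tiling game, following the same template as the proof of Theorem~\ref{pspace} but now using the extra power of the continuous buffer to mimic the alternation between \emph{Starter} and \emph{Completer}. Given a tiling system $\mathcal{T}=(T,H,V,t_I,t_F)$ and an $n\in\Nat$, I would build two NBA $\mathcal{A}$ and $\mathcal{B}$, of size polynomial in $|T|+n$ and with all states accepting, over an alphabet of bit-tagged tiles $\Sigma:=(T\times\{0,1\})\cup\{\$,\#\}$, the bit serving to record whose turn a tile belongs to. The automaton $\mathcal{A}$ is designed so that Spoiler's words range exactly over the row-by-row transcripts of a tiling-game play: an (in)finite sequence of $\$$-separated blocks of $n$ tiles, where in each block the \emph{first} tile carries the bit reserved for Starter and the remaining $n-1$ tiles carry Completer's bit. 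The correspondence is that Spoiler enacts Completer's row-completions, while Duplicator enacts Starter's choice of successive row-starting tiles; simulation will hold precisely when Completer has \emph{no} winning strategy, i.e.\ on the complement of the EXPTIME-hard problem.

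The automaton $\mathcal{B}$ is the one of Figure~\ref{fig:autexptime}. Its central component, the states $q_{t}$ together with the $0$-labelled edges between them encoding the vertical relation $V$, is where Duplicator plays Starter: by choosing which $0$-edge to take she commits to the next row-starting tile, and the fact that only $V$-compatible edges exist forces her choices to be legal Starter moves. The self-loops $T\setminus t_F,1$ let her track a completion (the bit-$1$ tiles) as long as the final tile has not yet appeared; the exclusion of $t_F$ is the ``trap'' that will make her lose exactly when Completer succeeds in placing $t_F$ in a valid tiling. The peripheral gadgets (P1)--(P5), reached by the $\epsilon$-edges, let Duplicator branch off to an all-accepting $\Sigma$-sink the moment she can \emph{witness} a violation by Spoiler: a row-start that fails to echo her own $q_t$-choice (P1), a horizontal mismatch (P2), a vertical mismatch of a non-initial tile (P3), or a breach of the bit/$\$$-discipline (P4),(P5). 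Thus Duplicator wins a play either by reaching such a sink (catching Completer cheating) or by looping through the central component forever when $t_F$ never comes; she loses only when Spoiler reveals a fully valid tiling that reaches $t_F$, being then stuck at the $t_F$-excluding self-loop with no violation to exploit.

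The heart of the argument, and the step I expect to be the main obstacle, is showing that the continuous buffer lets Duplicator interleave her Starter-moves with Spoiler's Completer-moves \emph{faithfully} and over unboundedly many rounds. Concretely, Duplicator should keep a bounded sliding window of fewer than $n$ buffered letters so that, at the moment she commits to the $q_t$-edge for row $i$, she has read from the word exactly the information Starter is entitled to (the previous rows in full, but not yet the completion of row $i$) and no more. I would have to establish both directions of this: (i) that she cannot smuggle extra look-ahead into the current row's completion before fixing its first tile, which would make her strictly stronger than Starter and break the reduction; and (ii) that she can nevertheless maintain this window \emph{forever}, never being forced to flush. Part (ii) is precisely what distinguishes continuous from look-ahead simulation, since under the flush-only discipline she would at some round have to ``catch up'' completely and thereby commit to a whole block at once; this is why the identical construction does not already give the bound for $\sqsubseteq_{\look}$.

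With this timing correspondence in place, the two implications are routine. If Starter has a winning strategy, Duplicator plays it through the $q_t$-component; then against any word Spoiler produces she either meets a constraint violation, branching into the matching gadget to reach the accepting sink and win, or Spoiler avoids all violations but, by choice of Starter's strategy, can never place $t_F$ in a valid tiling, so Duplicator loops through the central component on an infinite (hence accepting) run and again wins; thus $\mathcal{A}\sqsubseteq_{\cont}\mathcal{B}$. Conversely, if Completer has a winning strategy, Spoiler plays it: the revealed tiling is valid against every row-start sequence Duplicator may choose and eventually reaches $t_F$, so Duplicator never has a violation to witness and is ultimately stuck, whence $\mathcal{A}\not\sqsubseteq_{\cont}\mathcal{B}$. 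By determinacy the two cases are exhaustive and complementary, so $\mathcal{A}\sqsubseteq_{\cont}\mathcal{B}$ holds iff Starter wins, i.e.\ iff Completer has no winning strategy; as that problem is EXPTIME-hard, so is deciding $\sqsubseteq_{\cont}$. Finally, since all states of $\mathcal{A}$ and $\mathcal{B}$ are accepting, the fair and unfair games coincide on this instance, so the same argument yields EXPTIME-hardness of $\sqsubseteq^{\fair}_{\cont}$ as well.
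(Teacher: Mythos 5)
You have correctly isolated the crux in your part (i), but you have not solved it, and with the construction as you describe it it cannot be solved: nothing in the rules of the continuous game forces Duplicator to keep her buffer below $n$ letters, or to commit to the $q_t$-edge for row $i$ before Spoiler has written the completion of row $i$ (or indeed arbitrarily many later rows). She may always defer consumption; the only penalty for deferring is that doing so \emph{forever} leaves her run finite, and that penalty bites only in the limit, never at any particular round. Consequently, in your reduction Duplicator is a clairvoyant Starter: she can wait until she has seen Spoiler's entire block for row $i$, then choose a $V$-legal first tile \emph{different} from the one Spoiler actually played and branch into the gadget (P1) to win --- even when Completer has a winning strategy in the tiling game. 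Symmetrically, Spoiler cannot implement Completer's strategy, because Completer's completion of row $i$ depends on Starter's choice of $t_{1,i}$, and Spoiler has no way to learn which $q_t$ Duplicator has committed to before he must emit row $i$: he has to play a letter every round, and in your encoding every letter he plays is fresh information about the tiling.

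The paper's proof resolves exactly this with a stalling gadget that your construction lacks. The alphabet is $T\uplus\{0,1\}$ and $\ba$ accepts words $b_0w_0b_1w_1\dots$ where the bit $1$ means ``repeat the previous row verbatim'' (enforced on Spoiler by property (P2)) while $0$ means ``advance to a vertically compatible next row'' (property (P3)). Spoiler plays $0v_i$ and then iterates $1v_i$: he keeps moving, so his run stays infinite and he cannot lose by getting stuck, while revealing nothing new. Since Duplicator must eventually consume the buffer to build an infinite run of her own, she is forced to take the $0$-edge into some $q_t$ --- that is, to make Starter's move --- \emph{before} seeing row $i+1$; only then does Spoiler play $0v_{i+1}$ with $v_{i+1}$ starting with $t$, namely Completer's answer to that move, and resume stalling with $1v_{i+1}$. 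This repeat-row mechanism is the essential idea of the reduction and is precisely what makes the timing correspondence of your third paragraph enforceable; without it the equivalence ``$\ba\sqsubseteq_{\cont}\babis$ iff Starter wins'' fails in the direction where Completer has a winning strategy.
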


\begin{proof}
Given a tiling system $\mathcal{T} = (T,H,V,t_0,t_F)$, 
we construct two NBA $\ba$, $\babis$ 
of polynomial size, that only contain accepting states,
such that there is a winning
strategy for Starter in the tiling game
if and only if there is a winning strategy for Duplicator in the 
continuous simulation game ($\ba\sqsubseteq_{\cont}\babis$, resp 
$\ba\sqsubseteq_{\cont}^{\fair}\babis$).

We consider the alphabet $T \uplus \{0,1\}$.
Spoiler's automaton $\ba$ is defined 
such that 
an infinite word $w$ is accepted by $\ba$ if and only if
it is of the form $b_0w_0b_1w_1b_2w_2\dots$, where for all $i\geq0$,
$b_i\in\{0,1\}$, $w_i\in T^n$, and
two consecutive tiles in $w_i$ are in the horizontal relation.

Duplicator's automaton does several things. It forces Spoiler
to repeat the previous row when bit $1$ occurs, i.e. if Spoiler
plays $w_i1w_{i+1}$,
then $w_i=w_{i+1}$. Duplicator also forces Spoiler
to provide a vertically matching row when bit $0$ occurs, i.e.
if Spoiler 
plays $w_i0w_{i+1}$, then $w_i$ and $w_{i+1}$ must be 
vertically compatible consecutive rows.
However, Duplicator does more: she always forces Spoiler
to start the row with a given tile $t$; this tile is determined
by the state $q_t$ 
in which Duplicator currently is.
Informally, the states $q_t$ of Duplicator's automaton $\babis$ are
such that (1) $q_{t_I}$ is the initial state of $\babis$, and 
(2) if one starts reading from $q_t$, the following holds:
\begin{enumerate} 
\item[(P1)] for
an infinite word starting with $0t'\dots$, with $t\neq t'$, 
one 
can pick an accepting run that does not depend on the infinite suffix;
\item[(P2)] for an infinite word starting with $bv1v'\dots$,
$b\in\{0,1\}$, $v,v'\in T^n$, and $v\neq v'$, 
one 
can pick an accepting run that does not depend on the infinite suffix;
\item[(P3)] for an infinite word starting with $bv0v'\dots$,
$b\in\{0,1\}$, $v,v'\in T^n$, if 
there is $i\in\{1,\dots,n\}$ such that the $i$-th letters of $v$ and $v'$ are 
not vertically compatible, then 
one 
can pick an accepting run that does not depend on the infinite suffix;
\item[(P4)] if $v\in T^n$ does not contain $t_F$, 
then $q_t\ar{1v}q_t$;
\item[(P5)] if $v\in T^n$ does not contain $t_F$, 
then $q_t\ar{0v}q_{t'}$ 
for all $t'$ such that $(t,t')\in \mathbin{V}$.
\end{enumerate}
We illustrate the construction of $\babis$ 
in Figure~\ref{fig:autexptime}.

The main component is formed by the states $q_{t_i}$ for $t_i \in T$.
Each $q_{t_i}$ is connected to another component that can detect a vertical mismatch (P3)
and a non-proper repetition (P2). Each state $q_{t_i}$ is also connected to a component that
can detect when Spoiler does not respect Duplicator's choice of the first tile (P1). 
Each state $q_{t_i}$ has a self-loop by reading $T\backslash t_F$ or $1$ (P4)
to consume the buffer and form an accepting run if one of Spoiler's mistakes is detected.
Moreover, the automaton $\babis$ encodes vertical compatibility for Duplicator's choice of the
first tile by having edges $(q_{t_i},0,q_{t_{i'}})\in \delta$ if and only if $(t_i,t_{i'}) \in V$ (P5).

We first show that if Completer has a winning strategy in the tiling game, then Spoiler has a 
winning strategy in the continuous fair simulation game on $\ba$ and $\babis$.
Spoiler plays as follows: first, he moves along 
$0v_1$, where $v_1$ is the first row of the tiling.
Then he iterates $1v_1$ for a while. 
This forces Duplicator to eventually remove $0v_1$ 
from the buffer, and commit to choosing some $q_t$, due to (P4) and (P5). 
Spoiler then considers the second row
$v_2$ that Completer would answer if Starter would put
$t$ at the beginning of the second row.  
Spoiler picks this 
row $v_2$, and plays $0v_2$, followed by iterations of $1v_2$, and repeats
the same principle.

Now we show that if Starter has a winning strategy  
then Duplicator has a winning strategy.
Duplicator first waits for the $2n+2$ first letters of Spoiler.
Because of (P1--P3), Spoiler has nothing better to do than
to play $0v1v$ for some $v$
encoding a valid first row of a tiling.
Duplicator considers the tile $t$ that would be played by Starter
in the second row if Completer played $v$ on the first row.
Duplicator then removes $0v$
and ends in the state $q_t$. From there, she waits again for $n+1$ letters,
so that the buffer now contains $1vbv'$ for some $b\in\{0,1\}$.
Repeating the same process if $b=1$, she can force Spoiler
to eventually play $0v'$ where $v'$ encodes a row vertically compatible
with $v$ and starting with $t$. Iterating this principle results
in a play won by Duplicator, since either Completer never uses the final tile or 
Spoiler's move can always be mimicked by Duplicator due to (P4) and (P5) or,
when Completer gets stuck on some row, 
Spoiler is forced to play a word with a vertical mismatch, and
Duplicator wins by accepting the rest of the word.
\end{proof}

One may wonder why the EXPTIME-hardness proof for continuous simulation does not need the machinery of the binary
counter as used in the PSPACE-hardness proof for look-ahead simulation. The reason is the following. In the look-ahead
game Duplicator always has to flush the buffer entirely. Thus, she has to wait for the entire row-by-row tiling to be produced
by Spoiler before she can point out a mistake. Thus, her best strategy is to wait for as long as possible but this would make
her lose ultimately. The integrated counter forces Spoiler to get closer and closer to the moment when he has to play the
final tile, and Duplicator can therefore relax and wait for that moment before she flushes the entire buffer. In the continuous
game, Duplicator's ability to consume parts of the buffer is enough to force Spoiler to not delay the production of a proper
tiling forever.


\section{Upper Bounds: Quotient Games}
\label{sec:upper}
\label{sec:quotient}
We now show that the bounds of the previous section are tight by establishing the decidability 
of buffered simulations with corresponding complexity bounds. For this, we define a ``quotient
 game'' that has a finite
state space, and show that it is equivalent to the buffered simulation
game. 

\paragraph*{Continuous Quotient Game.}
The quotient game is based on the congruence relation
associated with the Ramsey-based algorithm for complementation.
We briefly recall its definition.
Let us fix two B\"uchi automata $\ba=(Q,\Sigma,\delta,q_I,F)$ and 
$\babis=(Q,\Sigma,\delta,q_I',F)$ -- for simplicity we assume they share
the same state space and only differ in their initial state.  
We introduce the function 
$f_w: Q^2 \to \{0,1,2\}$ defined as 
$$
f_w(q,q')= \left\{\begin{array}{l@{\quad}l}
0 & \mbox{if }q\aar{w}q'\\
1 & \mbox{if }q\not\ar{w}q'\\
2 & \mbox{otherwise}
\end{array}\right.
$$
We say that two finite words $w,w'\in\Sigma^{*}$ are equivalent, 
$w\sim w'$, if $f_w=f_{w'}$. Observe that $\sim$ is an equivalence
relation, a congruence for word concatenation, and that 
the number $|\Sigma^*/{\sim}|$ of equivalence classes is bounded by $3^{|Q|^2}$.
We write $[w]$ to denote the equivalence class of $w$ with respect to $\sim$. 
We say a class $[w]$ is idempotent if $[ww]=[w]$.

\begin{definition} \rm
\label{def:continuous-quotient-game}
The \emph{continuous quotient game} is played between players Refuter and Prover\footnote{We use different player names on purpose to make an easy distinction between the original
  simulation game and the quotient game.}
as follows. Initially, Refuter's pebble is on $q_0:=q_I$ and Prover's pebble is on 
$q_0':=q_I'$. The players use an abstraction by equivalence classes of a buffer that, initially, 
contains $[\epsilon]$. On each round $i \geq 1$:
\begin{enumerate}
 \item Refuter chooses two equivalence classes $[w_1]$, $[w_2]$ and
a state $q_i$, such that $q_{i-1}\ar{w_1}q_i\aar{w_2}q_i$
 and $[w_2]$ is idempotent
 \item Prover chooses $q_i'$ such that  $q_{i-1}'\xrightarrow{\beta w_1 w_2} q_{i}'\aar{w_2}q_i'$.
 The value $\beta$ of the abstract buffer is set to $[w_2]$ for the next turn.
\end{enumerate}
Prover wins the play if Refuter gets stuck or the play is 
infinitely long.
\end{definition}

\begin{proposition}
\label{prop:continuous-decidable}
Whether Prover has a winning strategy for the
continuous quotient game is decidable in EXPTIME.
\end{proposition}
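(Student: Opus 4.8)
The plan is to show that the continuous quotient game, although it involves an abstract buffer, can be recast as a finite-state game with an $\omega$-regular winning condition, and then invoke the standard complexity of solving such games.

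First I would bound the size of the game graph. A position of the game consists of Refuter's state $q_{i-1}$, Prover's state $q_{i-1}'$, and the value of the abstract buffer $\beta$, which is always an equivalence class $[w]$ with respect to $\sim$. By the observation preceding the definition, the number of such classes is at most $3^{|Q|^2}$, so the buffer ranges over an exponentially large but \emph{finite} set. Refuter's move picks a pair of classes $([w_1],[w_2])$ with $[w_2]$ idempotent together with a target state; Prover's move picks a target state. Thus the arena is finite, with a number of positions and edges bounded by $3^{|Q|^2}$ times a polynomial in $|Q|$, i.e.\ singly exponential in $|Q|$. The key point, which I would verify carefully, is that the legality of each move -- the conditions $q_{i-1}\ar{w_1}q_i\aar{w_2}q_i$ for Refuter and $q_{i-1}'\xrightarrow{\beta w_1 w_2} q_i'\aar{w_2}q_i'$ for Prover -- depends only on the classes $[w_1],[w_2],\beta$ and the endpoints, not on concrete representative words. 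This follows because $f_w$ records exactly the relations $q\ar{w}q'$ and $q\aar{w}q'$, these are all that the move conditions reference, and $\sim$ is a congruence for concatenation so that $[\beta w_1 w_2]$ is well defined. Therefore all transition relations needed to build edges are computable from the abstract data, and the arena can be constructed (and each class can be represented by its function $f_w$) in exponential time.

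Next I would address the winning condition. By Definition~\ref{def:continuous-quotient-game}, Prover wins if Refuter gets stuck or the play is infinite. Equivalently, Refuter wins exactly those finite maximal plays that end because Prover cannot move, i.e.\ Prover wins iff the play reaches a dead end for Refuter or continues forever. This is a simple reachability/safety-style condition on the finite arena: it is a safety condition for Prover (never reach a Prover-stuck position) combined with the reachability objective of forcing Refuter into a stuck position. Such a condition is trivially $\omega$-regular and can even be handled by attractor computation; it is certainly subsumed by a parity condition with a constant number of priorities. Games with such conditions on a finite arena of size $N$ are solvable in time polynomial in $N$, hence in time singly exponential in $|Q|$.

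Assembling these pieces gives the EXPTIME bound: build the exponential-size arena, determine the winning condition, and solve the resulting finite game in time polynomial in the arena size, which is EXPTIME overall. The main obstacle, and the step deserving the most care, is the well-definedness and computability claim of the second paragraph: one must check that every condition appearing in a move -- including the idempotency requirement $[w_2 w_2]=[w_2]$ and the compound path condition $q'_{i-1}\xrightarrow{\beta w_1 w_2}q'_i\aar{w_2}q'_i$ involving the concatenation of three abstract blocks -- is decidable from the $f$-representations alone, using that $\sim$ is a congruence. Once this abstraction-soundness at the level of move legality is nailed down, the remaining complexity accounting is routine.
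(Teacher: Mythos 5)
Your proposal is correct and follows essentially the same route as the paper's proof: bound the arena (states, classes of $\sim$, abstract buffer) by a single exponential, note that move legality and the monoid operations are computable from the $f_w$-representations, observe that the winning condition is a reachability/safety condition, and solve the finite game in time polynomial in the arena size. The paper additionally spells out the fixpoint computation of the monoid $\Sigma^*/{\sim}$ and counts the intermediate Prover-to-move configurations explicitly, but these are details your argument subsumes.
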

\begin{proof}
Observe first that the arena of the quotient game is finite
and can be computed in exponential time.
Indeed, a configuration of a quotient game is either a tuple 
$(q,q',[w])$ for Refuter's turn or a tuple $(q,q',[b],[w],[w'])$ for Prover's turn.
The arena of the quotient game is thus finite and its size is bounded by 
$2|Q|^2\cdot |\Sigma^*/{\sim}|^2=2|Q|^2\cdot 3^{2|Q|^2} = 2^{\mathcal{O}(|Q|^2 \cdot \log |Q|)}$. The finite monoid
$\Sigma^*/{\sim}$ can be computed in exponential time: starting from the set 
$\{ [a] \mid a \in \Sigma \}$, compose any two classes until a fixpoint is reached. Composition 
of two equivalence classes given as functions of type $Q^2 \to \{0,1,2\}$ is not hard to compute
\cite{dhl-fsttcs06}.

Observe now that the quotient game is a reachability game
from Refuter's point of view (he wins if he reaches a configuration in which
Prover gets stuck), so once the arena is computed, one can decide the
winner of the game in time polynomial in the size of the arena, which
is exponential in $|Q|$.
\end{proof}


We show that quotient games characterise the relation $\sqsubseteq^{\fair}_{\cont}$. 

\begin{lemma}
\label{lem:quotient2continuous}
$\ba\sqsubseteq^{\fair}_{\cont}\babis$ only if Prover has a winning
strategy for the continuous quotient game.
\end{lemma}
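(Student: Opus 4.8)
The plan is to convert a winning strategy $\sigma$ for Duplicator in the continuous fair simulation game into a winning strategy $\tau$ for Prover in the quotient game. Since Prover plays the role of Duplicator (both are the ``defending'' players), $\tau$ maintains, as the quotient play unfolds, a single coherent play of the continuous game in which a virtual Spoiler concretises Refuter's abstract moves and virtual Duplicator answers according to $\sigma$. The invariant I keep after round $i$ is that Prover's quotient pebble $q_i'$ equals the position of the virtual Duplicator pebble at a distinguished point of this concrete play, reached from $q_{i-1}'$ along $\sigma$'s response, and that the concrete buffer left over at that point lies in the class $\beta_i=[w_2^{(i)}]$.

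First I describe how Prover answers a single Refuter move $q_{i-1}\ar{w_1}q_i\aar{w_2}q_i$ with $[w_2]$ idempotent. Prover fixes concrete representatives $w_1,w_2$ and extends the virtual play by moving Spoiler from $q_{i-1}$ to $q_i$ along a $w_1$-path and then traversing the accepting loop $q_i\aar{w_2}q_i$ repeatedly, appending letters one at a time while virtual Duplicator answers each of them by $\sigma$. Consider the hypothetical continuation in which Spoiler pumps $w_2$ forever: Spoiler's run then visits $F$ in every loop, hence is accepting, so since $\sigma$ is winning, virtual Duplicator produces an infinite accepting run that consumes the whole word and therefore crosses infinitely many $w_2$-boundaries while visiting $F$ infinitely often. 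By a pigeonhole argument on the finitely many Duplicator states recorded at these boundaries, there are two boundaries, after the $a$-th and the $a'$-th copy of $w_2$ with $a<a'$, at which Duplicator is in the same state $q^{*}$ and between which an accepting state is visited. This yields $q^{*}\aar{(w_2)^{a'-a}}q^{*}$, and since $[w_2]$ is idempotent we get $[(w_2)^{a'-a}]=[w_2]$ and hence $q^{*}\aar{w_2}q^{*}$. Prover plays $q_i':=q^{*}$.

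By causality of $\sigma$ its behaviour on the finite prefix up to the $a'$-th boundary does not depend on the hypothetical infinite tail, so in the actual virtual play Prover pumps $w_2$ exactly $k_i\ge a'+1$ times; virtual Duplicator's pebble at the $a'$-th boundary is then genuinely $q^{*}=q_i'$. It remains to check that $\tau$ produces a legal quotient move. Virtual Duplicator reaches $q_i'$ from $q_{i-1}'$ by reading the leftover copies of $w_2^{(i-1)}$ remaining from the previous round (of class $\beta_{i-1}$), followed by $w_1$ and then $a'$ copies of $w_2$; collapsing all powers of the idempotent classes shows this word is $\sim$-equivalent to $\beta_{i-1}w_1w_2$, so indeed $q_{i-1}'\xrightarrow{\beta_{i-1}w_1w_2}q_i'$ as required (choosing the distinguished point of round $i-1$ not at the very last available boundary guarantees at least one full copy of $w_2^{(i-1)}$ is read first, so its class is exactly $\beta_{i-1}$). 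Together with $q_i'\aar{w_2}q_i'$ this shows Prover's move is always well-defined, so Refuter can never force Prover to get stuck and every $\tau$-play is infinite; by definition Prover wins.

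The main obstacle is the extraction in the middle paragraph: one must pump the idempotent loop enough to expose an accepting repetition of Duplicator's state and argue, via idempotency of $[w_2]$ and the winning assumption on $\sigma$ applied to a hypothetical infinite continuation, that the extracted $q^{*}$ carries a genuine accepting $w_2$-loop. The surrounding bookkeeping is then routine; the only delicate point is keeping the virtual play coherent across rounds, which is ensured by using a single $\sigma$-consistent concrete play and appealing to the causality of $\sigma$ so that the per-round hypothetical continuations agree on their common prefixes.
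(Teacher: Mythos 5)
Your proof runs in the direct direction (from a winning strategy $\sigma$ of Duplicator to one of Prover), whereas the paper argues by contraposition, turning a winning strategy of Refuter into one of Spoiler and implicitly using determinacy of the finite reachability quotient game. That is a genuinely different and legitimate route; your loop-extraction step (pigeonhole on Duplicator's states at the $w_2$-boundaries of the pumped idempotent class, then collapsing $[w_2^{a'-a}]=[w_2]$ to get $q^{*}\aar{w_2}q^{*}$) is essentially the content of the paper's Lemma~\ref{lem:ramsey}, which the paper only needs for the converse direction. The bookkeeping with distinguished points and the leftover copies of $w_2^{(i-1)}$ (ensuring the connecting word has class $\beta_{i-1}[w_1][w_2]$) is also handled correctly.

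There is, however, one step whose justification as written is wrong: the claim that ``by causality of $\sigma$ its behaviour on the finite prefix up to the $a'$-th boundary does not depend on the hypothetical infinite tail,'' from which you conclude that pumping $w_2$ any $k_i\ge a'+1$ times suffices. In a buffered simulation game Duplicator may postpone consuming the letters between boundaries $a'-1$ and $a'$ until long after Spoiler has revealed many further letters, and her choice of transitions for those early letters may then depend on that later information --- this dependence on the tail is precisely what distinguishes buffered from ordinary simulation, so the quoted sentence is false in general. What causality actually gives is that $\sigma$'s responses through round $M$ depend only on Spoiler's letters through round $M$. The repair is easy but necessary: since Duplicator's run in the hypothetical play is infinite, there is a round $M$ by which she has consumed past the $a'$-th boundary; take $k_i$ to be at least the number of copies of $w_2$ Spoiler has produced by round $M$ (and at least $a'+1$, to keep a full leftover copy). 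Then the actual and hypothetical histories coincide through round $M$, and only then is Duplicator's state at boundary $a'$ genuinely $q^{*}$. With this adjustment your argument goes through.
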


\begin{proof}
Assume that Refuter has a winning strategy for the continuous
quotient game.
We want to show that then Spoiler has a winning strategy for the continuous
fair simulation game. We actually consider a variant of the continuous
fair simulation game in which Spoiler may add more than one letter in a 
round, and Duplicator only removes one letter in a round. Clearly, Spoiler has
a winning strategy for this variant if and only if he has a winning strategy
for the continuous fair simulation game as defined in 
Section~\ref{sec:continuous}. Spoiler's strategy basically follows
the one of Refuter. In the first round, Spoiler adds into the buffer
some representatives $w_1,w_2$ of the equivalence classes played by
Refuter. Spoiler then adds $w_2$ into the buffer on every round 
until the answer of Duplicator can be identified as a Prover's move
in the quotient game, i.e.\ if Duplicator does not get stuck, she will eventually produce
a trace of the form
$q'_0 \xrightarrow{w_1w_2^*} q'_1 \aar{w_2^+} q'_1$, since
there are only finitely many states in the automaton.
Then Spoiler considers the state $q_1'$ in which Duplicator
is and looks at what Refuter would play if Prover would have
picked $q_1'$. Iterating this principle, Spoiler mimics
Refuter's winning strategy: eventually, since Prover gets stuck on some round $i$,
Duplicator will get stuck when trying to mimick $w_1w_2^+\ldots w_i^+$, and then
Spoiler wins by continuously adding $w_i$ into the buffer for the rest of the play.
\end{proof}

A key argument in the proof of the converse direction is the following lemma which
is easily proved using Ramsey's Theorem \cite{Ramsey:30}.

\begin{lemma}
\label{lem:ramsey}
Let $q_0,q_1,\dots$ be an infinite accepting run on $a_1 a_2 \ldots$. Then there are
$i,j,k$ with $i<j<k$ such that $q_i=q_j=q_k$ is accepting 
and $a_{i+1}\dots a_{j}\sim a_{j+1}\dots a_{k}\sim a_{i+1}\dots a_{k}$.
\end{lemma}

\begin{lemma}
\label{lem:continuous2quotient}
$\ba\sqsubseteq^{\fair}_{\cont}\babis$ if Prover has a winning strategy for the continuous 
quotient game.
\end{lemma}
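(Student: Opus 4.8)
The plan is to compile a winning strategy for Prover in the continuous quotient game into a winning strategy for Duplicator in the continuous fair simulation game, under the role correspondence Prover~$\leftrightarrow$~Duplicator and Refuter~$\leftrightarrow$~Spoiler. So fix a winning strategy for Prover. Duplicator watches Spoiler build his run $\rho = s_0 s_1\dots$ on the word $a_1 a_2\dots$ and, in parallel, maintains a play of the quotient game whose Refuter moves she reads off from $\rho$ and whose Prover answers she obtains from the fixed strategy. Since Duplicator wins automatically whenever $\rho$ is finite or non-accepting, the only case to treat is that $\rho$ is infinite and accepting, and I will ensure that then Duplicator produces an infinite accepting run $\rho'$ on the same word.

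The extraction of Refuter moves is where Lemma~\ref{lem:ramsey} enters. Suppose Duplicator has processed $r$ abstract rounds, her pebble sits on Prover's state $\hat p_r$, and Spoiler's pebble (playing Refuter) sits on a state $\hat q_r$ reached after position $c_r$ of the word. The suffix of $\rho$ from $\hat q_r$ is again an infinite accepting run, so Lemma~\ref{lem:ramsey} yields indices $i<j<k$ beyond $c_r$ with $s_i=s_j=s_k$ accepting and $a_{i+1}\dots a_j\sim a_{j+1}\dots a_k\sim a_{i+1}\dots a_k$; in particular $w_2:=a_{i+1}\dots a_j$ has idempotent class and $w_1:=a_{c_r+1}\dots a_i$ satisfies $\hat q_r\ar{w_1} s_i\aar{w_2}s_i$. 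Duplicator takes the earliest such witness, which she can detect from the prefix of $\rho$ seen so far by skipping her turn until it appears, feeds the Refuter move $([w_1],[w_2],s_i)$ to Prover's strategy, and lets $\hat p_{r+1}$ be Prover's answer $\hat p_r\xrightarrow{\beta_r w_1 w_2}\hat p_{r+1}\aar{w_2}\hat p_{r+1}$.

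The main obstacle is the buffer mismatch: Prover's move reads $\beta_r w_1 w_2$, where $\beta_r=[w_2^{(r)}]$ is the abstract buffer left from the previous round, whereas between two checkpoints Spoiler emits only $w_1 w_2$ worth of fresh letters, so a contiguous per-round realisation cannot supply the leading $\beta_r$. I resolve this by always keeping one idempotent loop in reserve. Using the third index $k$, the block $a_{j+1}\dots a_k$ is a second copy of the same $[w_2]$-class and is left unconsumed in the FIFO buffer as the physical realisation of the next buffer $\beta_{r+1}=[w_2]$; dually, the block reserved at the previous checkpoint supplies the concrete letters of $\beta_r$ that Duplicator now consumes. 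Thus in round $r+1$ Duplicator removes from the buffer, in a single move, exactly the concrete word $\beta_r\,w_1\,w_2$, whose class agrees with Prover's move, so she can move her pebble from $\hat p_r$ to $\hat p_{r+1}$. Because each letter of $\rho$ is then consumed exactly once and the reserved block returns to $\hat q_{r+1}$ (as $s_j=s_k$), the bookkeeping of both pebble positions stays consistent; the first round is the degenerate case $\beta_0=[\epsilon]$.

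It remains to argue that $\rho'$ is accepting, and here idempotency does the real work. Since $[w_2]$ is idempotent and Prover guarantees the accepting loop $\hat p_{r+1}\aar{w_2}\hat p_{r+1}$, the class of $\beta_r w_1 w_2$ is unchanged by appending another $w_2$, so concatenating the path from Prover's move with the accepting loop yields an accepting path of the same class; hence $f_{\beta_r w_1 w_2}(\hat p_r,\hat p_{r+1})=0$, and the concrete segment realising round $r+1$ can be chosen to visit an accepting state. As $\rho$ being accepting forces every suffix to be an infinite accepting run, detection never permanently stalls, so there are infinitely many successful rounds and $\rho'$ visits accepting states infinitely often; Duplicator wins. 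Together with Lemma~\ref{lem:quotient2continuous}, this shows that the continuous quotient game characterises $\sqsubseteq^{\fair}_{\cont}$.
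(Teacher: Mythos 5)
Your proof is correct and follows essentially the same route as the paper's: Duplicator skips turns until Lemma~\ref{lem:ramsey} yields an idempotent decomposition $w_1w_2w_2'$ in the buffer, feeds the corresponding Refuter move to Prover's winning strategy, consumes $w_1w_2$ (plus the previously reserved block), and keeps the third copy of the $[w_2]$-class in the buffer as the concrete realisation of the abstract buffer for the next round. You in fact supply two details the paper leaves implicit --- the explicit bookkeeping showing the reserved block resolves the $\beta_r$ mismatch, and the congruence/idempotency argument that $f_{\beta_r w_1 w_2}(\hat p_r,\hat p_{r+1})=0$ so each round's segment can be chosen accepting --- both of which are sound.
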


\begin{proof}
When the continuous simulation game starts, Duplicator just
skips his turn for a while. Then Spoiler starts providing
an infinite accepting run $q_0 a_0 q_1 a_1\dots$ -- if he does not, Duplicator
waits forever and wins the play. At some point, 
Lemma~\ref{lem:ramsey} applies: 
the buffer contains $w_1w_2w_2'$ with $[w_2]=[w_2']$ being idempotent,
and Spoiler is in a state $q$ that admits a $[w_2]$-loop. Then Duplicator
considers the state $q'$ in which Prover would move 
if Refuter played $[w_1],[w_2],q$ in the first round. 
She removes $w_1w_2$ from the buffer and moves to this state $q'$.
Duplicator proceeds identically in the next rounds, and either Spoiler
eventually gets stuck or he follows a non-accepting run or the play
is infinite.
\end{proof}

Lemmas~\ref{lem:quotient2continuous} and \ref{lem:continuous2quotient} together with
Prop.~\ref{prop:continuous-decidable} yield an upper bound on the complexity of deciding
continuous simulation. Together with the lower bound from Theorem~\ref{exptime} we get 
a complete characterisation of the complexity of continuous fair simulation.

\begin{corollary}
Continuous fair simulation is EXPTIME-complete. 
\end{corollary}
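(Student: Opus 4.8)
The plan is to assemble the corollary from the ingredients already in place: the two directions of the game characterisation (Lemmas~\ref{lem:quotient2continuous} and~\ref{lem:continuous2quotient}), the decidability bound for the abstract game (Proposition~\ref{prop:continuous-decidable}), and the matching hardness result (Theorem~\ref{exptime}). EXPTIME-completeness splits into a membership part and a hardness part, which I would treat separately.

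For membership, I would first reduce the general problem -- in which $\ba=(Q,\Sigma,\delta,q_I,F)$ and $\babis=(Q',\Sigma,\delta',q_I',F')$ may have unrelated state spaces -- to the special situation assumed in Definition~\ref{def:continuous-quotient-game}, where both automata live over a common state space and differ only in their initial state. This is achieved by passing to the disjoint union of the two automata while keeping the two distinguished initial states $q_I$ and $q_I'$; the construction is clearly polynomial and neither adds nor removes any runs, so it leaves $\ba\sqsubseteq^{\fair}_{\cont}\babis$ invariant. On this normalised instance, Lemmas~\ref{lem:quotient2continuous} and~\ref{lem:continuous2quotient} together state that $\ba\sqsubseteq^{\fair}_{\cont}\babis$ holds if and only if Prover has a winning strategy in the continuous quotient game. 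By Proposition~\ref{prop:continuous-decidable} the latter is decidable in EXPTIME, which gives the upper bound.

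For hardness, I would simply invoke Theorem~\ref{exptime}, which already establishes that deciding $\sqsubseteq^{\fair}_{\cont}$ is EXPTIME-hard by reduction from the complement of the EXPTIME-hard tiling game. Combining the two bounds yields EXPTIME-completeness.

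Since every component is established earlier in the section, there is no genuinely hard step left; the only points requiring care are the normalisation to a shared state space, so that the quotient-game machinery literally applies to arbitrary input automata, and the observation that the quotient game, although played on an arena of exponential size, is a plain reachability game and therefore solvable in time polynomial in that arena -- keeping the overall procedure within EXPTIME. I would make both remarks explicit but expect neither to pose a real obstacle.
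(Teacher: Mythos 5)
Your proposal is correct and follows essentially the same route as the paper: the upper bound is obtained by combining Lemmas~\ref{lem:quotient2continuous} and~\ref{lem:continuous2quotient} with Proposition~\ref{prop:continuous-decidable}, and the lower bound is Theorem~\ref{exptime}. Your extra remark about normalising to a shared state space via disjoint union is a sensible explicit treatment of a detail the paper glosses over, but it does not change the argument.
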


\paragraph*{Look-Ahead Quotient Game.}
In order to establish the decidability of look-ahead simulations, we 
introduce a look-ahead quotient game. The game essentially
differs from the continuous quotient game in that it does not use
a buffer.

\begin{definition}
\label{def:look-quotient-game}
The look-ahead quotient game is played between Refuter and Prover. Initially, Refuter's 
pebble is on $q_0:=q_I$, Prover's pebble is on $q_0':=q_I'$, and the buffer 
$\beta$ contains the equivalence class $[\epsilon]$.
On each round $i \geq 1$:
\begin{enumerate}
 \item Refuter chooses two equivalence classes $[w_1]$, $[w_2]$ and
a state $q_i$, such that $q_{i-1}\ar{w_1}q_i\aar{w_2}q_i$ and
$[w_2]$ is idempotent.
 \item Prover chooses $q_i'$ such that there is a 
$q_{i-1}'\ar{w_1}q_{i}'\aar{w_2}q_i'$.
\end{enumerate}
Prover wins the play if Refuter gets stuck or if the play is 
infinitely long.
\end{definition}

Following the same kind of arguments we used for the continuous quotient
game, the result below can be established.

\begin{proposition}
$\ba\sqsubseteq^{\fair}_{\look}\babis$ if and only if Prover has a winning
strategy for the look-ahead quotient game.
\end{proposition}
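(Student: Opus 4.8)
The plan is to prove both implications by transcribing the two lemmas already established for the continuous case, Lemma~\ref{lem:quotient2continuous} and Lemma~\ref{lem:continuous2quotient}, into the look-ahead setting. The only structural change between the two quotient games (Definition~\ref{def:continuous-quotient-game} versus Definition~\ref{def:look-quotient-game}) is the disappearance of the abstract buffer~$\beta$, and this mirrors exactly the rule of the look-ahead simulation game: since Duplicator must flush the \emph{entire} buffer whenever she moves, she never leaves a suffix behind, so there is nothing to carry over from one round to the next. As before, Refuter corresponds to Spoiler and Prover to Duplicator, and I would use that both the simulation game and the finite-arena reachability quotient game are determined.

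For the direction ``Prover wins $\Rightarrow \ba\sqsubseteq^{\fair}_{\look}\babis$'' I would reuse the argument of Lemma~\ref{lem:continuous2quotient} almost verbatim. Duplicator skips her turns; if Spoiler never produces an infinite accepting run she wins by waiting. Otherwise Lemma~\ref{lem:ramsey} eventually exposes a decomposition of Spoiler's word into $w_1 w_2 w_2'$ with $[w_2]=[w_2']$ idempotent and a state $q$ admitting an accepting $w_2$-loop reached from the start by $w_1$. Duplicator feeds $[w_1],[w_2],q$ to Refuter's first move and lets Prover answer with a state $q_1'$ with $q_0'\ar{w_1}q_1'\aar{w_2}q_1'$. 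The key point, and the place where the look-ahead rule helps rather than hurts, is that Duplicator now flushes the whole buffer $w_1 w_2 w_2'$ at once: using $q_0'\ar{w_1}q_1'$, $q_1'\aar{w_2}q_1'$, and $[w_2']=[w_2]$ (hence $q_1'\aar{w_2'}q_1'$), she realises the path $q_0'\ar{w_1}q_1'\aar{w_2}q_1'\aar{w_2'}q_1'$ and lands precisely on Prover's state $q_1'$. Iterating, each Prover move makes her visit an accepting state inside a flushed block, so along an infinite Prover play her run is accepting and she wins.

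For the converse, ``$\ba\sqsubseteq^{\fair}_{\look}\babis \Rightarrow$ Prover wins'', I argue contrapositively: if Prover has no winning strategy then, by determinacy of the reachability quotient game, Refuter wins, and I turn his strategy into a winning strategy for Spoiler following Lemma~\ref{lem:quotient2continuous}. When Refuter plays $[w_1],[w_2],q_i$, Spoiler plays a representative of $w_1$ to reach $q_i$ and then keeps iterating the accepting $w_2$-loop, so his own run is accepting. Duplicator cannot skip forever against an accepting run, so she flushes; because she must take the whole buffer each time, her successive flushes read $w_1 w_2^{k_1}, w_2^{k_2}, w_2^{k_3}, \ldots$. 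Using finiteness of $Q$ together with idempotency of $[w_2]$ (so that $[w_2^m]=[w_2]$ for every $m\ge 1$), the states Duplicator reaches after her flushes must recur, exposing a state $q_i'$ with $q_0'\ar{w_1}q_i'$ and a $w_2$-loop at $q_i'$; Spoiler reads this off as Prover's answer and consults what Refuter would reply against it, thus mimicking the whole Refuter play. Either the recurring loop is never accepting, in which case Duplicator's run is non-accepting while Spoiler's is accepting and Spoiler wins, or the recurrences yield legal Prover moves, in which case Refuter's winning strategy eventually reaches a configuration where Prover is stuck, forcing Duplicator to get stuck as well.

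The main obstacle is precisely this last identification in the converse direction: matching Duplicator's total flush of a block $w_1 w_2^k$ against a bufferless Prover move of the restricted shape $q_0'\ar{w_1}q_i'\aar{w_2}q_i'$. In the continuous game a residual $\beta=[w_2]$ absorbs the extra idempotent factors and lets the first component read $\beta w_1 w_2$, but here the first component must be exactly $w_1$; the argument therefore has to pin Prover's target $q_i'$ to the state reached right after the $w_1$-prefix and then use the pigeonhole recurrence among Duplicator's post-flush states, closed up under $\sim$ via idempotency, to supply the accepting $w_2$-loop. Verifying that these two requirements can always be met simultaneously --- equivalently, that forgetting the buffer is sound precisely because a look-ahead Duplicator never lags behind by an unflushed idempotent block --- is the delicate step; everything else is a routine transcription of the continuous proofs.
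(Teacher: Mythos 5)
Your first direction (Prover wins $\Rightarrow$ $\ba\sqsubseteq^{\fair}_{\look}\babis$) is a correct transcription of Lemma~\ref{lem:continuous2quotient}: the full flush along $q_0'\ar{w_1}q_1'\aar{w_2}q_1'\aar{w_2'}q_1'$ is legitimate and lands on Prover's state, so that half is fine. The problem is the converse, and you were right to flag it as the delicate step --- but it is not merely delicate, it is a genuine gap that your pigeonhole argument cannot close for the game exactly as written in Definition~\ref{def:look-quotient-game}. Your recurrence among Duplicator's post-flush states yields a state $p$ with $p\aar{w_2}p$ and $q_0'\ar{w_1w_2^M}p$ for some $M\ge 1$; idempotency gives $[w_1w_2^M]=[w_1w_2]$, not $[w_1]$, so you obtain $q_0'\ar{w_1w_2}p$ and have no way to conclude $q_0'\ar{w_1}p$, which is what a legal Prover move requires. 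Concretely, take $\ba$ with $q_0\ar{a}q_1$, $q_1\ar{b}q_1$ and $\babis$ with $q_0'\ar{a}s$, $s\ar{b}p$, $p\ar{b}p$, all states accepting. Duplicator wins the look-ahead simulation game (skip once, flush $ab$ to reach $p$, then flush each subsequent $b$), yet against Refuter's legal first move $[a],[b],q_1$ the only $a$-successor of $q_0'$ is $s$, which carries no $b$-loop, so Prover is stuck immediately. Hence ``the two requirements can always be met simultaneously'' is exactly the claim that fails, and the implication ``Refuter wins $\Rightarrow$ Spoiler wins'' that you argue contrapositively is false for this reading of the game.

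What this shows is that the bufferless Prover move has to be read so that its first component absorbs idempotent blocks: Prover should be allowed to pick $q_i'$ with $q_{i-1}'\ar{w_1w_2^k}q_i'\aar{w_2}q_i'$ for some $k\ge 0$ (by idempotency only $k\in\{0,1\}$ matter). This is precisely the role of the abstract buffer $\beta$ in the continuous quotient game, which your proposal discards too eagerly on the grounds that a look-ahead Duplicator ``never leaves a suffix behind'' --- she leaves no suffix, but she may overshoot $w_1$ by several $w_2$-blocks before her loop closes. With the corrected move both of your arguments go through: in the forward direction Duplicator realises the flush $w_1w_2^kw_2w_2'$ ending at $q_i'$, and in the converse you sample Duplicator's run at the block boundaries $w_1w_2^i$ (irrespective of where her flushes fall) and the recurrence produces the legal move $q_0'\ar{w_1w_2}p\aar{w_2}p$. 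The paper offers no proof of this proposition, only a pointer to the continuous case, so the imprecision arguably originates in Definition~\ref{def:look-quotient-game} rather than in your reconstruction; but as a proof of the statement as literally given, your converse direction does not stand.
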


The size of the arena of a look-ahead quotient game is again
exponential in the size of the automata; but there are
only $|Q|^2$ positions for Refuter, so look-ahead quotient
games can be solved slightly better than continuous ones.

\begin{proposition}
\label{lem:look-decidable}
Whether Prover has a winning strategy for 
the look-ahead quotient game can be decided in PSPACE.
\end{proposition}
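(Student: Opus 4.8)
The plan is to exploit the same reachability-game structure as in the continuous case, but to take advantage of the fact that in the look-ahead quotient game no buffer class is carried between rounds, so a position at the start of a round is merely a pair $(q,q')\in Q\times Q$. From Refuter's point of view this is again a reachability game: Refuter wins exactly when he can force the play into a configuration in which Prover has no legal response, and Prover wins the complementary condition. Since there are only $|Q|^2$ such Refuter positions, the Refuter attractor stabilises after at most $|Q|^2$ steps; hence if Refuter can win at all, he can win within $|Q|^2$ rounds. This polynomial bound on the number of rounds is what will ultimately keep the alternation depth polynomial. Because PSPACE is closed under complement, it suffices to decide in PSPACE whether Refuter has a winning strategy.

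First I would phrase the decision procedure as an alternating computation of polynomial running time (recall that alternating polynomial time equals PSPACE). The machine maintains the current pair $(q,q')$ together with a round counter bounded by $|Q|^2$. In each round Refuter, in an existential phase, guesses his move and Prover, in a universal phase, guesses a response: Refuter picks a target state $q_i$ together with abstract values for $w_1$ and $w_2$, represented directly as the functions $f_{w_1},f_{w_2}\colon Q^2\to\{0,1,2\}$, each of which is a polynomial-size object; Prover then universally ranges over the at most $|Q|$ candidate states $q_i'$, discarding those that are not legal responses. The path conditions $q\ar{w_1}q_i$, $q_i\aar{w_2}q_i$ and the idempotency of $[w_2]$ are read off directly from these functions and from their monoid product, which is computable in polynomial time following the composition rule of the Ramsey-based construction~\cite{dhl-fsttcs06}.

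The one point that needs care, and the main obstacle, is to verify that the guessed functions $f_{w_1},f_{w_2}$ are \emph{realisable}, i.e.\ that there actually exist words whose abstractions they are; equivalently, that they are elements of the monoid $\Sigma^*/{\sim}$. Since this monoid has up to $3^{|Q|^2}$ elements, I cannot afford to construct it explicitly. Instead I would decide realisability as a reachability question in the Cayley graph generated by $\{f_a\mid a\in\Sigma\}$, solved by a Savitch-style alternating search: to test whether a target function $f$ is reachable within $2^k$ composition steps, guess a midpoint function $g$ and universally verify reachability of $g$ and of $f$ from $g$, each in $2^{k-1}$ steps. As any reachable element is reached within $3^{|Q|^2}$ steps, the recursion depth is $\mathcal{O}(|Q|^2)$ and each level performs only polynomial work on polynomial-size function tables, so realisability is itself decidable in alternating polynomial time.

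Combining these ingredients, the whole procedure is an alternating computation whose total running time is polynomial: the outer game contributes $|Q|^2$ rounds of existential and universal guessing, and the nested realisability tests contribute polynomially many further alternations with polynomial work each. By the equivalence of alternating polynomial time with PSPACE, deciding whether Refuter wins is in PSPACE, and by closure under complement so is deciding whether Prover has a winning strategy. I expect the realisability test to be the delicate part, both because it is where the exponential number of equivalence classes is confined and because it must be arranged so that its alternations and running time stay polynomial rather than merely using polynomial space, which by itself would only yield an EXPTIME bound through alternation.
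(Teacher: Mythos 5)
Your proof is correct, but it follows a genuinely different route from the paper's. The paper gives a guess-and-check argument: nondeterministically guess the set $W\subseteq Q\times Q$ of pairs from which Prover wins (a polynomial-size certificate, precisely because no buffer class is carried between rounds), verify in polynomial space that $W$ is closed under the game dynamics --- for every pair in $W$ and every Refuter move $[w_1],[w_2],q_1$ there is a Prover response $q_1'$ with $(q_1,q_1')\in W$ --- and conclude by Savitch's theorem. You instead simulate the game directly as an alternating polynomial-time computation, which requires two extra ingredients the paper does not need: the observation that the Refuter attractor over the $|Q|^2$ round-boundary positions stabilises within $|Q|^2$ rounds (so plays can be truncated), and determinacy of the reachability game to justify deciding Refuter's side and complementing (worth stating explicitly, though it is immediate for finite reachability games). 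What your version buys is an explicit treatment of the point the paper silently absorbs into ``can be checked in polynomial space'': quantifying over equivalence classes means deciding whether a guessed function $Q^2\to\{0,1,2\}$ is realised by some word, and your Savitch-style midpoint recursion of depth $\mathcal{O}(|Q|^2)$ in the Cayley graph handles this within alternating polynomial time (inside the paper's NPSPACE procedure a plain letter-by-letter nondeterministic guess of a witnessing word, maintaining only the current function table, would do equally well). Both arguments ultimately rest on the same two facts: round-boundary positions are just pairs of states, and equivalence classes have polynomial-size representations.
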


\begin{proof}
Consider the following non-deterministic algorithm that guesses the set
$W$ of all pairs $(q_0,q_0')$ of initial configurations of the game 
such that Duplicator has a winning strategy. 
For all $(q_0,q_0')$ in $W$, the following can then be checked
in polynomial space:
for all $[w_1],[w_2]$, and $q_1$ that could be played by Spoiler,
there is $q_1'$ that can be played by Duplicator such that $(q_1,q_1')$ is in 
$W$. Inclusion in PSPACE then follows from Savitch's Theorem \cite{Savitch70}.
\end{proof}

\begin{corollary}
Look-ahead fair simulation is PSPACE-complete.
\end{corollary}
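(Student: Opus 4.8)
The plan is to assemble the matching lower and upper bounds already established in this section. Showing PSPACE-completeness means establishing both PSPACE-hardness and membership in PSPACE.

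For hardness, I would simply invoke Theorem~\ref{pspace}, which already shows that deciding $\ba \sqsubseteq_{\look}^{\fair} \babis$ (and the unfair variant $\sqsubseteq_{\look}$) is PSPACE-hard via the reduction from the complement of the $n \times 2^n$ tiling problem.

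For membership, I would chain the two preceding results on the look-ahead quotient game. The characterisation proposition states that $\ba \sqsubseteq_{\look}^{\fair} \babis$ holds if and only if Prover wins the look-ahead quotient game, and Proposition~\ref{lem:look-decidable} states that deciding whether Prover wins this game lies in PSPACE. Composing the two yields a PSPACE decision procedure for look-ahead fair simulation: one reduces the simulation question to the quotient game and then runs the space-efficient algorithm from Proposition~\ref{lem:look-decidable}, which only ever inspects the polynomially many Refuter positions and their successors rather than materialising the exponential-sized arena.

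Together, hardness and membership give PSPACE-completeness. There is no genuine obstacle at this point; all the substantive work resides in the earlier theorem and propositions, and the corollary merely records their combination. The only subtlety worth flagging is that the characterisation is phrased for the fair relation, so the upper bound applies directly to $\sqsubseteq_{\look}^{\fair}$; specialising to automata in which every state is accepting then transfers it to the unfair $\sqsubseteq_{\look}$ as well, which matches the scope of the hardness result in Theorem~\ref{pspace}.
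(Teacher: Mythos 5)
Your proposal is correct and follows exactly the route the paper intends: the corollary is stated without proof immediately after Proposition~\ref{lem:look-decidable} precisely because it is the combination of Theorem~\ref{pspace} (hardness) with the quotient-game characterisation and its PSPACE decision procedure (membership). Your additional remark about transferring the upper bound to the unfair relation via all-accepting automata is a harmless extra observation consistent with how the paper treats the unfair variants elsewhere.
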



\section{Properties of Buffered Simulations}
\label{sec:properties}
In this section we investigate some fundamental properties of buffered simulations starting
with a comparison to language inclusion. Remember that the main motivation for studying
simulations is the approximation thereof.

\paragraph*{Continuous Simulation vs.\ Language Inclusion.}
Continuous simulation is strictly smaller than language inclusion. It is not hard to see that
continuous simulation implies language inclusion, so we focus on strictness.

The following example shows a case where language inclusion holds, indeed $L(\ba)=L(\babis)$,
but $\ba \not \sqsubseteq^{\fair}_{\cont} \babis$ since Spoiler
can win the game by always producing $a$, whereas Duplicator has to keep the pebble on the initial
state of $\babis$ to be ready for a possible $b$.

\begin{center}
\begin{tikzpicture}[initial text={}, node distance=12mm, every state/.style={minimum size=5mm}, thick]
  \node[state,initial,accepting]   (0) 		    {};
  \node[state,accepting]           (1) [right of=0] {};
  \node[state,initial]   (2)  [right of=1, node distance=5cm] {};
  \node[state,accepting] (3) [above right of=2] {};
  \node[state,accepting] (4) [below right of=3] {};

  \path[->] (0) edge[loop above] node [above] {$a$} (0)
                edge node [above] {$b$}              (1)

            (1) edge[loop above] node  {$b$}        (1)

            (2) edge node [above] {$a$}              (3)
                edge node [below] {$b$}              (4)
                edge[loop above] node [above] {$a$}  (2)

            (3) edge[loop right] node {$a$} (3)
            (4) edge[loop right] node {$b$} (4);
\end{tikzpicture}
\end{center}

\paragraph*{Topological Characterisation.}
Consider a run of an NBA on some word $w = a_1a_2 \ldots \in \Sigma^\omega$ to be an 
infinite sequence $q_0,a_1,q_1,\ldots$ with the usual properties, i.e.\ the word is actually listed 
in the run itself. We write $\Runs(\ba)$ for the set of runs of $\ba$ in this respect, and $\ARuns(\ba)$ for the set of accepting runs. 

Given a set $\Delta$, the set $\Delta^{\omega}$ is equipped
with a standard structure of a metric space. The distance
$d(x,y)$ between two infinite sequences $x_0x_1x_2\dots$ 
and $y_0y_1y_2\dots$ is the real $\frac{1}{2^{i}}$, where $i$ is the first
index for which $x_i\neq y_i$. Intuitively,
two words are ``significantly close'' if they share a ``significantly long'' 
prefix. 
The sets $\Runs(\ba)$ and $\ARuns(\ba)$ are subsets of 
$(Q\cup\Sigma)^{\omega}$; $\Runs(\ba)$ has the particularity
of being a closed subset, and it is thus a compact space, whereas
$\ARuns(\ba)$ is not.

We call a function $f:\ARuns(\ba)\to\ARuns(\babis)$ \emph{word preserving}
if for all $\rho\in\ARuns(\ba)$, $f(\rho)$ and $\rho$ are 
labelled with the same word.
It can be seen that $L(\ba)\subseteq L(\babis)$
holds if and only if there is a word preserving function 
$f:\ARuns(\ba)\to\ARuns(\babis)$.

\begin{proposition}
\label{prop:topology}
Let $\ba,\babis$ be two NBA. 
The following holds:
$\ba\sqsubseteq^{\fair}_{\cont}\babis$ if and only if there is a continuous
word preserving function
$f:\ARuns(\ba)\to\ARuns(\babis)$.
\end{proposition}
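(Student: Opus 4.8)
The plan is to prove both directions by relating Duplicator's strategies in the continuous fair simulation game to continuous word-preserving functions, using the metric structure on run spaces. The key observation is that continuity of $f$ at a point $\rho$ means that $f(\rho)$ is determined, up to any finite prefix, by a sufficiently long finite prefix of $\rho$ — which is exactly the ``bounded look-ahead'' behaviour that Duplicator's buffer-based strategy exhibits locally.

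\emph{From simulation to a continuous function.} Assume $\ba\sqsubseteq^{\fair}_{\cont}\babis$ and fix a winning strategy $\sigma$ for Duplicator. First I would argue that we may assume $\sigma$ is \emph{productive}, i.e.\ that on any infinite run of Spoiler, Duplicator eventually produces an infinite run as well. If $\rho$ is accepting this is forced by the winning condition; if $\rho$ is non-accepting we can compose $\sigma$ with a fallback that keeps consuming the buffer, so that Duplicator still produces an infinite (possibly non-accepting) run on $w$ — this matters because $f$ must be total on $\ARuns(\ba)$, and we only apply $f$ to accepting inputs. Given $\rho\in\ARuns(\ba)$, I let $f(\rho)$ be the run $\rho'$ that $\sigma$ produces against Spoiler playing $\rho$. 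Since $\rho$ is accepting and $\sigma$ is winning, $\rho'$ is an accepting run on the same word $w$, so $f$ is well-defined and word-preserving. For continuity, the crucial point is that in any single round Duplicator consumes only finitely many buffer letters; hence each state $q'_m$ in $\rho'$ is produced after Spoiler has revealed only a finite prefix of $\rho$. Therefore, to fix the first $N$ symbols of $f(\rho)$ it suffices to fix a finite prefix of $\rho$ of some length $M=M(\rho,N)$, which gives continuity of $f$ at $\rho$ in the metric $d$.

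\emph{From a continuous function to a simulation strategy.} Conversely, suppose a continuous word-preserving $f:\ARuns(\ba)\to\ARuns(\babis)$ exists. I would have Duplicator play so as to ``follow $f$'' along whatever accepting run Spoiler is building. The difficulty is that $f$ is only defined on \emph{accepting} runs, whereas during the game Duplicator sees a growing finite prefix and does not yet know whether Spoiler's run will be accepting. The idea is: as long as Spoiler's play could still be completed to an accepting run, Duplicator uses continuity of $f$ to commit safely. Concretely, by continuity, once Spoiler has revealed a long enough prefix $u$, the value $f(\rho)$ restricted to any bounded number of symbols is the same for all accepting $\rho$ extending $u$ (that are consistent with the moves made so far), so Duplicator can flush the corresponding buffer prefix and move her pebble accordingly without risk. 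If Spoiler's run turns out non-accepting, Duplicator wins regardless by the first clause of the winning condition; if it is accepting, then Duplicator's produced run converges to $f(\rho)$, which is accepting, so she wins.

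\emph{Main obstacle.} The hard part is the converse direction, specifically the \emph{uniformity} needed to turn pointwise continuity of $f$ into well-defined finite commitments during the game. Pointwise continuity alone guarantees that for each \emph{fixed} accepting $\rho$ a finite prefix determines the next output symbol, but Duplicator must act on a prefix \emph{before} knowing which accepting extension $\rho$ she is dealing with; different accepting extensions of the same prefix could in principle force different next moves. The resolution is to exploit that $\Runs(\ba)$ is compact (as noted in the excerpt): on the compact set of all \emph{runs} extending the current prefix, one can extract a finite sub-cover and thereby obtain the uniform look-ahead bound that justifies Duplicator's flush. Making this compactness argument interact cleanly with the fact that $f$ is defined only on the non-closed subset $\ARuns(\ba)$ — rather than on all of $\Runs(\ba)$ — is the delicate technical point, and I would handle it by arguing that any ambiguity among accepting extensions can be deferred by letting Duplicator consume the buffer only up to the point where continuity forces agreement.
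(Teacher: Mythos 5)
The paper states Proposition~\ref{prop:topology} without proof, so I am evaluating your argument on its own merits. Your forward direction is correct and essentially forced: $f(\rho)$ is the run that a fixed winning strategy produces against Spoiler playing $\rho$, each output symbol depends only on a finite game history and hence on a finite prefix of $\rho$, and the winning condition guarantees that on accepting inputs the output is an infinite accepting run on the same word. The ``productivity'' modification you discuss is unnecessary (and slightly risky, since tampering with the strategy could in principle affect its winning property): $f$ is only ever applied to accepting runs, where the winning condition already forces an infinite accepting response.

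The gap is in the converse, and it sits precisely at the point you flag as the ``main obstacle'': your proposed resolution via compactness of $\Runs(\ba)$ does not work as stated, because $f$ is defined only on $\ARuns(\ba)$, which is not closed in $\Runs(\ba)$ and hence not compact, so no finite-subcover argument over $\Runs(\ba)$ applies to $f$, and pointwise continuity on a non-compact domain does not yield the uniform modulus you are after. The good news is that no uniformity is needed. Define Duplicator's strategy directly: after Spoiler has produced the finite path $u$, let $g(u)$ be the longest common prefix of $\{f(\rho)\mid \rho\in\ARuns(\ba),\ u \text{ a prefix of } \rho\}$ (and let Duplicator skip if this set is empty, since then Spoiler can no longer produce an accepting run), and have Duplicator extend her run to the longest prefix of $g(u)$ whose word is a prefix of the word played so far. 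This is always a legal move in the continuous game (each $g(u)$ is a prefix of an actual run of $\babis$) and $g$ is monotone in $u$, so no commitment is ever retracted. If Spoiler's run $\rho$ is not accepting, Duplicator wins regardless of her output; if $\rho$ is accepting, then continuity of $f$ \emph{at the single point $\rho$} says exactly that for every $N$ there is an $M$ such that all accepting runs sharing the first $M$ symbols with $\rho$ have images agreeing with $f(\rho)$ on the first $N$ symbols, i.e.\ $g(u_M)$ has length at least $N$ for the length-$M$ prefix $u_M$ of $\rho$; hence Duplicator's output converges to $f(\rho)$, which is accepting. So the ``delicate technical point'' dissolves once the strategy is phrased as this greatest-common-prefix construction --- which is in fact what your final sentence gestures at --- rather than as an appeal to compactness.
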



Proposition~\ref{prop:topology} 
has some interesting consequences. First, it shows
again that $\ba\sqsubseteq_{\cont}^{\fair}\babis$ implies 
$L(\ba)\subseteq L(\babis)$, and explain the difference between the two in 
terms of continuity. Second, it shows that $\sqsubseteq_{\cont}$ and
$\sqsubseteq^\fair_{\cont}$ are transitive relations, since the composition
of two continuous functions is continuous. 
Another application of Proposition~\ref{prop:topology}
is that $\sqsubseteq_{\cont}$ (but not $\sqsubseteq^{\fair}_{\cont}$)
is decidable in 2-EXPTIME using 
a result of Holtmann et al.~\cite{DBLP:journals/corr/abs-1209-0800}. This is of course not
optimal as seen in the previous section.

\begin{remark}
It might be asked whether look-ahead simulation has a topological 
characterisation similar to this one. The answer is negative:
if it had (a reasonable) one, it would entail that look-ahead simulation is
a transitive relation. However,
Mayr and Clemente~\cite{MayrC13} gave examples of automata that 
show that look-ahead simulation is not transitive in general.
\end{remark}


\paragraph*{Buffered Simulations in Automata Minimisation.}
An important application of simulation relations in automata theory is automata 
minimisation. A preoder $R$ over the set of states of an automaton $\ba$ 
defines two new automata: its quotient $\ba/R$, and its pruning 
$\mathsf{prune}(\ba,R)$, c.f.\ Clemente's PhD thesis~\cite{phd-clemente} for a formal 
definition of these notions. Intuitively, the 
quotient automaton is defined by merging states that are equivalent with 
respect
to the preorder $R$, whereas pruning is obtained by removing a transition
$q\ar{a}q_1$ if it is ``subsumed'' by a transition $q\ar{a}q_2$, where 
$q_1 \mathbin{R} q_2$. 

A preoder $R$ is then said to be good for quotienting (GFQ)
if $L(\ba/R)=L(\ba)$, and good for pruning (GFP) if 
$L(\mathsf{prune}(\ba,R))=L(\ba)$. It can be checked that GFQ and GFP
are antitone properties: if $R\supseteq R'$ and $R$ is GFQ (resp. GFP),
then so does $R'$.

Fair simulation is neither GFQ nor GFP; as a consequence, fair continuous
and fair look-ahead simulations, which contain fair simulation, are not
GFQ and GFP either. Simulation preorders that are used for automata
minimisation rely on less permissive winning conditions than fairness.
The \emph{delayed} winning condition asserts that every round in which
Spoiler visits an accepting state is (not necessarily immediately) succeeded by some
round in which Duplicator also visits an accepting state. The \emph{direct} winning 
condition imposes that, if 
Spoiler visits an accepting state in a given round, then in the same round
Duplicator should visit an accepting state. Delayed simulation
is known to be GFQ but not GFP, whereas direct simulation is known to be both
GFP and GFQ. Since a play of a continuous/look-ahead simulation game yields
a play of the standard simulation game, there is a natural buffered counterpart
of delayed and direct simulation, obtained by changing the winning conditions accordingly.


\begin{proposition}
\label{prop:sound}
Delayed continuous and delayed look-ahead simulation is GFQ but not GFP, and 
direct continuous as well as direct look-ahead simulation is GFP and GFQ.
\end{proposition}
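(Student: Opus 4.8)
The plan is to prove the three genuinely new facts for the \emph{continuous} variants and to obtain all remaining claims from the antitonicity of GFQ and GFP. I will use that ordinary, look-ahead and continuous simulation are nested, ${\sqsubseteq^{\star}}\subseteq{\sqsubseteq^{\star}_{\look}}\subseteq{\sqsubseteq^{\star}_{\cont}}$ for $\star\in\{\direct,\delayed\}$ — flushing one letter per round is an ordinary strategy and a flush-or-skip strategy is a special continuous one — and that each of these simulations implies language inclusion. Write $\simeq$ for the equivalence induced by the relevant preorder.

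Two of the claims are then immediate. The failure of GFP for the delayed variants needs no construction: GFP is antitone, both ${\sqsubseteq^{\delayed}_{\cont}}$ and ${\sqsubseteq^{\delayed}_{\look}}$ contain ordinary delayed simulation, and the latter is known not to be GFP. Likewise, once the positive results hold for the continuous variants, the look-ahead ones follow by antitonicity, because ${\sqsubseteq^{\star}_{\look}}\subseteq{\sqsubseteq^{\star}_{\cont}}$ and a larger GFQ (resp.\ GFP) preorder forces the smaller one to be GFQ (resp.\ GFP). It thus remains to show directly that continuous direct simulation is GFQ and GFP and that continuous delayed simulation is GFQ.

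For these I would isolate two properties of continuous simulation. First, a one-step closure: if $p\sqsubseteq^{\star}_{\cont}q$ and $p\ar{a}r$ then Duplicator can answer $a$ by a single transition $q\ar{a}\hat q$, or skip with $\hat q=q$, so that $r\sqsubseteq^{\star}_{\cont}\hat q$; since the winning condition inspects the initial position, $f\in F$ together with $f\sqsubseteq^{\direct}_{\cont}s$ already forces $s\in F$. Second, transitivity of ${\sqsubseteq^{\direct}_{\cont}}$ and ${\sqsubseteq^{\delayed}_{\cont}}$, obtained from the composition argument behind Proposition~\ref{prop:topology}: the witnessing continuous word-preserving functions also preserve acceptance pointwise (direct) or eventually (delayed), and these constraints compose. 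Both GFQ and GFP then reduce to providing a winning Duplicator strategy — $\ba/{\simeq}\sqsubseteq^{\star}_{\cont}\ba$ for GFQ and $\ba\sqsubseteq^{\direct}_{\cont}\mathsf{prune}(\ba,{\sqsubseteq^{\direct}_{\cont}})$ for GFP — the opposite language inclusion being trivial in each case. Duplicator keeps her pebble dominating, modulo the buffer, Spoiler's current class (resp.\ source); following a quotient/pruned edge witnessed by $p\ar{a}r$ via the closure lemma lands her on some $\hat q$ with $r\sqsubseteq^{\star}_{\cont}\hat q$, and transitivity upgrades this to domination of the whole $\simeq$-class of $r$ (for pruning she substitutes a removed transition by its surviving dominating sibling). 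Using her look-ahead she synchronises acceptance: at an accepting class the dominating pair is $f\sqsubseteq^{\star}_{\cont}\hat q$ with $f\in F$, forcing an accepting concrete state at the same position (direct) or eventually (delayed), so her run is accepting whenever Spoiler's is and the required inclusion follows.

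The main obstacle is transitivity of the \emph{look-ahead} direct and delayed simulations. It is needed not only for the antitone transfer but already for the look-ahead statements to be meaningful, since quotient and pruning are defined only for preorders; and it cannot be imported from Proposition~\ref{prop:topology}, whose continuity argument is unavailable for look-ahead — indeed the \emph{fair} look-ahead relation is not transitive. I would establish transitivity of ${\sqsubseteq^{\direct}_{\look}}$ and ${\sqsubseteq^{\delayed}_{\look}}$ directly by composing the two Duplicator strategies and checking that the flush-or-skip discipline admits a winning relay when the acceptance obligations are local (direct) or eventual (delayed) rather than coordinated infinitely often as under fairness. A secondary point is to verify that the transition replacement in the continuous GFP argument terminates, which follows from finiteness of the state space and the preorder structure.
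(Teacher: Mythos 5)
Your handling of the negative claim (delayed buffered simulation is not GFP, by the contrapositive of antitonicity applied to ordinary delayed simulation) and your antitone transfer of the positive claims from the continuous to the look-ahead variants agree with the paper's setup. For the positive claims themselves, however, you take a genuinely different route: the paper proves nothing directly about the buffered games, but instead observes that continuous (hence look-ahead) simulation is subsumed by multi-pebble simulation as soon as Duplicator has at least $|Q'|$ pebbles, imports the known facts that multi-pebble delayed simulation is GFQ and multi-pebble direct simulation is GFP and GFQ, and lets antitonicity finish the argument. Your replacement of that one-line reduction by a direct strategy construction is where the genuine gap lies.

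The ``one-step closure'' lemma on which your construction rests is false as stated. If $p\sqsubseteq^{\star}_{\cont}q$ and Spoiler plays $p\ar{a}r$, a winning Duplicator may indeed skip, but the configuration reached is (Spoiler at $r$, Duplicator at $q$, buffer $=a$); from there Duplicator must eventually trace $a$ \emph{followed by} Spoiler's future word, whereas $r\sqsubseteq^{\star}_{\cont}q$ refers to the empty-buffer game in which she traces Spoiler's future word only. These are different obligations, and neither reduces to the other, so ``the skip option preserves the bare relation'' does not hold; nor need a single matching transition $q\ar{a}\hat q$ with $r\sqsubseteq^{\star}_{\cont}\hat q$ exist (Example~\ref{ex:branching} exhibits exactly this failure for the fair variant). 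Worse, if the closure lemma did hold, $\sqsubseteq^{\star}_{\cont}$ would be the greatest fixpoint of a polynomial-time computable operator on $Q\times Q'$ and hence decidable in polynomial time, contradicting the EXPTIME-hardness of delayed and direct continuous simulation stated at the end of Section~\ref{sec:properties} (unless P $=$ EXPTIME). Any repair must thread the buffer content through the simulation invariant, at which point you are essentially re-proving the multi-pebble GFQ/GFP theorems in the buffered setting --- possible, but nothing like the routine induction you sketch. A secondary problem is your plan to prove transitivity of $\sqsubseteq^{\direct}_{\look}$ and $\sqsubseteq^{\delayed}_{\look}$: the Mayr--Clemente non-transitivity examples exploit the flush discipline forcing premature commitments rather than the fairness condition, so this is unlikely to go through; it is also unnecessary, since for quotienting one may pass to the equivalence generated by the look-ahead relation, which is still contained in the transitive continuous relation, and antitonicity applies to that.
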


The proof is a rather straightforward consequence of similar results for
multi-pebble simulations~\cite{pebble}, and from the fact that these
multi-pebble simulations subsume continuous simulations (provided the
number of pebbles is larger than the number of states of duplicator's 
automaton). 

Recall that bounded buffered simulation relations are polynomial time computable
\cite{HutagalungLL13} and can be used to significantly improve 
language inclusion
tests for NBA using automata minimisation \cite{MayrC13}. 
We already showed that fair, unbounded, buffered simulation is not
polynomial time computable, and thus cannot be used for improving language 
inclusion tests. We now extend this result to the delayed and direct buffered
simulations.

\begin{theorem}
The delayed (resp. direct) continuous simulation is EXPTIME hard, and
the delayed (resp. direct) look-ahead simulation is PSPACE hard.
\end{theorem}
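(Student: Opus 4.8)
The plan is to adapt the two hardness reductions from Theorems~\ref{pspace} and \ref{exptime}, which were stated for the fair winning condition, so that they work verbatim for the delayed and direct conditions as well. The key observation is that in both constructions every state of both automata is accepting, so the fair, delayed, and direct winning conditions coincide on all plays arising in those games. Indeed, when every state of $\babis$ is accepting, Duplicator visits an accepting state in \emph{every} round in which she moves her pebble, and trivially satisfies the direct requirement (hence also the delayed and fair ones) whenever she produces an infinite run; conversely, whenever she produces only a finite run, she wins under all three conditions precisely when Spoiler's run is non-accepting or finite. Thus the winning regions for Spoiler and Duplicator are identical across the three conditions on exactly these arenas.

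First I would restate this coincidence explicitly: on an instance in which all states are accepting, a play is won by Duplicator under the direct condition if and only if it is won under the delayed condition if and only if it is won under the fair condition. This is immediate from the definitions, since the premise ``Spoiler visits an accepting state in round $i$'' is always true, and the direct obligation ``Duplicator visits an accepting state in the same round'' is met exactly when she makes a move that lands on a(n always accepting) state. The only subtlety is rounds in which Duplicator skips her turn; here the standard reading is that the obligation is discharged against the rounds in which she actually advances her pebble, and because the automata in both figures have all states accepting, no genuine obligation can ever go unmet along an infinite Duplicator run.

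Given this coincidence, the reductions carry over with no change to the automata. For the direct and delayed look-ahead case I would reuse the automata $\ba,\babis$ of Figure~\ref{fig:autpspace} and the argument of Theorem~\ref{pspace}: a valid $n\times 2^n$ tiling exists if and only if Spoiler wins, and since all states are accepting this equivalence holds simultaneously for $\sqsubseteq^{\direct}_{\look}$ and $\sqsubseteq^{\delayed}_{\look}$, yielding PSPACE-hardness. For the direct and delayed continuous case I would reuse Figure~\ref{fig:autexptime} and Theorem~\ref{exptime}: Starter wins the tiling game if and only if Duplicator wins the continuous game, and again the coincidence of winning conditions gives EXPTIME-hardness for $\sqsubseteq^{\direct}_{\cont}$ and $\sqsubseteq^{\delayed}_{\cont}$.

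The main obstacle I anticipate is not the reduction itself but pinning down the book-keeping of the delayed/direct obligation across Duplicator's \emph{skipped} turns, since the buffered games decouple Spoiler's and Duplicator's round counters. The cleanest way to handle this is to note that, because every state is accepting, the direct condition is satisfied at each step where Duplicator moves, so that the delayed condition (a weaker requirement) is satisfied too, and the fair condition (weaker still) holds along any infinite Duplicator run; hence no pending obligation can ever survive forever. With this remark in place, the winning condition is genuinely irrelevant on these arenas and the previous proofs transfer unchanged.
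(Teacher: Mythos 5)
Your proposal is correct and takes essentially the same route as the paper: the paper's proof is exactly the observation that the automata in the reductions of Theorems~\ref{pspace} and~\ref{exptime} have all states accepting, so the fair, delayed and direct winning conditions coincide on those instances and the hardness arguments carry over verbatim. Your additional remark about discharging obligations across Duplicator's skipped turns is a sensible elaboration of a point the paper leaves implicit, but it does not change the argument.
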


This follows from a simple observation:
the automata that were used in
the hardness proofs had all states accepting, and in this case, 
fair, delayed and direct simulation coincide.


\bibliographystyle{eptcs}
\bibliography{refs-with-doi}


\end{document}